\newtheorem{proposition}{Proposition}
\theoremstyle{definition}
\newtheorem{remark}{Remark}
\newtheorem{example}{Example}
\newtheorem{algorithm}{Algorithm}
\renewcommand{\P}{\mathcal{P}}
\renewcommand{\H}{\mathcal{H}}
\newcommand{\R}{\mathbb{R}}
\newcommand{\argmin}{\mathrm{argmin}}
\newcommand{\argmax}{\mathrm{argmax}}
\newcommand{\sign}{\mathrm{sign}}
\newcommand{\shrink}{\mathrm{shrink}}
\renewcommand{\d}{\partial}
\begin{document}

\title{A Robust Time Series Model with Outliers and Missing Entries}
\author{Triet M. Le\thanks{NGA Research, The National Geospatial-Intelligence Agency, Springfield, VA. Email: Triet.M.Le@NGA.mil.}}
\maketitle

\abstract{
This paper studies the problem of robustly learning the correlation function for a univariate time series with the presence of noise, outliers and  missing entries. The outliers or anomalies considered here are sparse and rare events that deviate from normality which is depicted by a correlation function and an uncertainty condition. This general formulation is applied to univariate time series of event counts (or non-negative time series) where the correlation is a log-linear function with the uncertainty condition following the Poisson distribution. Approximations to the sparsity constraint, such as $\ell^r, 0< r\le 1$, are used to obtain robustness in the presence of outliers. The $\ell^r$ constraint is also applied to the correlation function to reduce the number of active coefficients. This task also helps bypassing the model selection procedure. Simulated results are presented to validate the model.
}

\section{Introduction}

Forecasting (anticipating) future events or activities is an important problem in data science. A common task for a forecaster is to predict normal future events using past and current observations and to alert when the observed number of events significantly deviates from the predicted value. If events and activities are random then there is no hope in making any meaningful future prediction. However, if events are correlated in the sense that events at time $t$ depend on events prior to $t$ and the correlation function that describes this dependency persists throughout the observed data, then this correlation function can be used to predict future events based on past and current observations. In many real datasets, the observed data is incomplete and is often contaminated with outliers and random noise.  An important task is then to robustly learn the correlation function that describes the underlying normal activities and patterns from the observed data. We start with the following setup.

Let $\{t_0,\cdots, t_N\}$ be a uniform discretization of some time interval of interest which we assume to be $[0,T)$. Let $y_i$ be the number of observed events and $u_i$ be the expected number of events that occur in the time interval $[t_{i-1},t_i)$. Observed events are determined by the conditional probability
\begin{equation}\label{c1}
 P(y_i | u_i,\theta_i)
\end{equation}
where $\theta_i$ is some auxiliary variable, representing for instance the variance, that $y_i$ may depend on. In a univariate case, the goal is to learn how $u_i$ depends on prior $u_j$ and $y_j$, for $ j<i$. In other words, one is interested in finding the function $f$ such that
\begin{equation}\label{c2}
u_i = f(\{u_j\}_{j<i},\{y_j\}_{j<i}).
\end{equation}

In this paper, we consider the case where only a partial series $\widetilde{y}_D:=\{\widetilde{y}_i\}_{i\in D}$ for some $D\subset \{1,\cdots, N\}$ is observed and it may contain outliers and anomaly. To tackle this problem, the following minimization problem is proposed
\begin{equation}\label{main_min_prob_eq}
\begin{split}
&\min_{a_0,a,b,y} J(a_0,a,b,y),\mbox{ where }\\
J(a_0,a,b,y) &= \sum_{i=1}^N \left[u_i - y_i\log(u_i) + \log(\Gamma(y_i+1)) \right]+ \lambda\sum_{i\in D}|y_i - \tilde{y}_i|^{r}\\
&+ \mu \left(\sum_{k=1}^p |a_k|^{s} + \sum_{k=1}^q |b_k|^{s}\right),
\end{split}
\end{equation}
for some $0<r,s\le 1$, and $\mu,\lambda>0$, to jointly learn $f, \theta:=\{\theta_i\}_{i=1}^N$ and the complete series $y:=\{y_i\}_{i=1}^N$ via imputation. 
\begin{itemize}
\item Here, we consider parametrically the correlation function $f$ defined as
$$
u_i = f\left(\{y_j\}_{j\le i-1},\{u_j\}_{j\le i-1}\right) = \max(\exp(\log(u_i+1)) - 1,0), 
$$
where $\log(u_i+1)$ is given by
\begin{equation}\label{u_eq}
\begin{split}
\log(u_i + 1) = a_0 + \sum_{k=1}^p a_k \log(y_{i-k}+1),
\end{split}
\end{equation}
 for some $p,q\ge 0$. 
 \item $\sum_{i\in D}|y_i - \tilde{y}_i|^{r}$ measures the sparsity of the sequence $\{y_i-\widetilde{y}_i\}_{i\in D}$ representing anomaly. Similarly, $\sum_{k=1}^p |a_k|^{s}$ and $\sum_{k=1}^q |b_k|^{s}$ impose sparsity on $a=\{a_k\}_{k=1}^p$ and $b=\{b_k\}_{k=1}^q$ and overcome the model selection issue (e.g. AIC, BIC, etc.). Both of these constraints are important for the recovery of $f$.
\end{itemize}
 With the presence of missing entries and outliers, Figure \ref{fig0} shows the ability of the model \eqref{main_min_prob_eq} to recover $f$ having
 \begin{eqnarray*}
a_0 &=& 1,\\
a=(a_1,a_2,a_3,a_4,a_5) 
&=& (0.25,-0.5,0, 0,-0.5,0.5).
\end{eqnarray*}

%=========================================
% add figure here
\begin{figure}
\begin{center}
\includegraphics[scale=0.45]{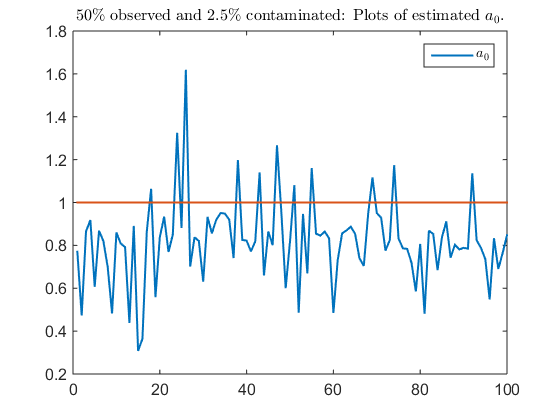}
\includegraphics[scale=0.45]{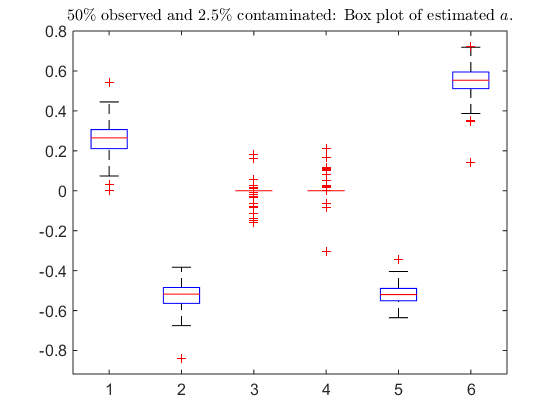}
\caption{Plots of estimated parameters $a_0$ and $a$ for 100 simulations when $50\%$ of the data is observed and $2.5\%$ of the observed entries are contaminated.}
\label{fig0}
\end{center}
\end{figure}

%========================================= 

In the followings, we provide motivations for the proposed model. Many well-known time series models can be described by \eqref{c1} and \eqref{c2}. For instance, in an $ARMA(p,q)$ model \cite{box1970time,george1994time}, each $y_i$ is defined as
\begin{equation}\label{arma_eq1}
y_i = \phi_0 + \sum_{k=1}^p \phi_k y_{i-k} + \sum_{k=1}^q \psi_k e_{i-k} + e_i,
\end{equation}
where $e_i$ follows $N(0,\sigma^2)$, a normal Gaussian distribution with mean $0$ and variance $\sigma^2$. Let 
\begin{equation}\label{arma_u_eq}
u_i := \phi_0 + \sum_{k=1}^p \phi_k y_{i-k} + \sum_{k=1}^q \psi_k e_{i-k},
\end{equation}
then equation \eqref{arma_eq1} implies $y_i = u_i + e_i$. This implies
\begin{equation}\label{arma_c1_eq}
\begin{split}
P(y_i|u_i,\theta_i = \sigma) &= P(y_i-u_i = e_i)\\
 &=\frac{1}{\sigma\sqrt{2\pi}}e^{-\frac{|y_i-u_i|^2}{2\sigma^2}}.
\end{split}
\end{equation}
Substituting $y_{i-k} - u_{i-k}$ for $e_{i-k}$ in \eqref{arma_u_eq}, one obtains
\begin{equation}\label{arma_c2_eq}
u_i = f(\{u_j\}_{j<i},\{y_j\}_{j<i}) =  \phi_0 + \sum_{k=1}^{p}a_k y_{i-k} + \sum_{k=1}^q b_k u_{i-k},
\end{equation}
for some real-valued $a_k$ and $b_k$ and some new positive integers $p$ and $q$. Here we assume zero boundary conditions, that is $y_{i-k}$ and $u_{i-k}$ are identically zero whenever $i-k \le 0$. Clearly, other types of boundary conditions such as reflection or Neumann can be used. Thus \eqref{arma_eq1} can be transformed into \eqref{arma_c1_eq} and \eqref{arma_c2_eq}, which are \eqref{c1} and \eqref{c2} respectively.

Another example is the Poisson linear autoregressive model (see \cite{harvey1989time,li1994time}, among others). Assuming $y_i$ is a non-negative integer, then \eqref{c1} is given by 
\begin{equation}\label{par_c1}
P(y_i|u_i) = \frac{u_i^{y_i} e^{-u_i}}{y_i!},
\end{equation}
and \eqref{c2} is given by 
$$
u_i = f(\{u_j\}_{j<i},\{y_j\}_{j<i}) =  a_0 + \sum_{k=1}^p a_k y_{i-k} + \sum_{k=1}^q b_k u_{i-k},
$$
where $a_0,a_k,$ and $b_k$ are non-negative. This shows that the model can only detect zero or positive correlations. To overcome this drawback, the log-linear model is often used \cite{zeger1988markov,harvey1989time,li1994time,fokianos2012count} where $u_i$ is defined such that
$$
\log(u_i+1) = a_0 + \sum_{k=1}^p a_k \log(y_{i-k}+1) + \sum_{k=1}^q b_k \log(u_{i-k}+1).
$$
Here $a_0$, $a_k$ and $b_k$ are real-valued and therefore can represent negative correlations. 

In the case of complete observations where we are given the series $y=\{y_i\}_{i=1}^N$ and some prior knowledge on the conditional probability condition \eqref{c1}, the task is then to learn the optimal $(f^*,\theta^*)$ that maximizes the likelihood function,
\begin{equation}\label{comp_max_eq}
(f^*,\theta^*) = \argmax_{f,\theta} \left\{L(f,\theta)  = P(y, f,\theta)\right\}.
\end{equation}

It can be shown (see for instance \cite{lutkepohl2007new}) that the joint probability in \eqref{comp_max_eq} is given by
\begin{equation}
P(y, f,\theta) = \prod_{i=1}^N P(y_i|u_i,\theta_i) P(f)P(\theta),
\end{equation}
assuming $f$ and $\theta$ are independent. The maximization problem in \eqref{comp_max_eq} is equivalent to
\begin{equation}\label{com_min_eq}
(f^*,\theta^*) = \argmin_{f,\theta} \left\{ -\sum_{i=1}^N \log(P(y_i|u_i,\theta_i)) - \log(P(f)P(\theta)) \right\}.
\end{equation}

With some knowledge about $\theta_{N+1}$, the conditional probability $P(y_{N+1}|u_{N+1},\theta_{N+1})$ provides the distribution of $y_{N+1}$. For a single value prediction $\hat{y}_{N+1}$, we can solve
\begin{equation}
\hat{y}_{N+1} := \argmax_{y_{N+1}} P(y_{N+1}|u_{N+1},\theta_{N+1}).
\end{equation}
In case where the conditional probability follows a Poisson distribution, then $y_{N+1}$ is completely determined by $u_{N+1}$, and it doesn't depend on the auxiliary variable $\theta_{N+1}$.

In this paper, we consider the problem of learning parametrically the underlying  correlation function $f$ and $\theta$ given a partially observed series $\widetilde{y}_D=\{\widetilde{y}_i\}_{i\in D}$ for some $D\subset \{1,\cdots, N\}$ which may be contaminated by outliers and anomalies. Given some prior knowledge about the uncertainty condition \eqref{c1}, the problem consists of: 1) extending $\widetilde{y}_D$ to the whole series $y$ (including $y_i, i\in D^c$) via imputation such that $y_i\approx \widetilde{y}_i$ for all $i\in D$ and 2) using the complete series $y$ to learn $f$. These two steps are done iteratively as they are inter-dependent. The interpretation of $y_i\approx\widetilde{y}_i$ is as follows:
\begin{enumerate}
\item Suppose $\widetilde{y}_i$ is normal, i.e. it can be described by $f,\theta$ and the uncertainty condition \eqref{c1}, then we would like to enforce $y_i = \widetilde{y}_i$.
\item On the other hand if $\widetilde{y}_i$ is anomalous, then we allow $y_i\neq \widetilde{y}_i$ and let the model decides a normal value for $y_i$. 
\end{enumerate}
Moreover, outliers and anomalies are interpreted as rare events that are supported sparsely on $D$. Based on this interpretation, we would like the difference series $\{y_i-\widetilde{y}_i\}_{i\in D}$ to be sparse. Thus this can be seen as solving the optimization problem:
\begin{equation}\label{log_gen_eq}
\min_{y,f,\theta} \left\{ -\sum_{i=1}^N \log(P(y_i|u_i,\theta_i)) - \log(P(f)) - \log(P(\theta))\right\},
\end{equation}
with the constraint that the partial series $\{y_i-\widetilde{y}_i\}_{i\in D}$ is sparse. Here, $P(f)$ and $P(\theta)$ are priors on $f$ and $\theta$ respectively. Sparsity is an essential ingredient in the theory of compressed sensing \cite{candes2006robust, candes2006stable,donoho2006compressed}. Approximating the sparsity constraint has been a subject of great importance as the exact sparsity problem is NP-hard. Here we consider sparsity approximations as proposed in \cite{candes2006robust, candes2006stable, donoho2006compressed,chartrand2007exact} by using $\ell^r$ for $0<r\le 1$. Incorporating these sparsity approximations into the minimizing energy \eqref{log_gen_eq}, we propose the following unconstraint variational problem
\begin{equation}\label{min_prob_eq}
\begin{split}
\min_{f,y,\theta}\Big\{
J(y,f,\theta) &=  -\sum_{i=1}^N \log(P(y_i|u_i,\theta_i)) - \log(P(f)) - \log(P(\theta))\\
& + \lambda \sum_{i\in D}|y_i - \tilde{y}_i|^r
\Big\}.
\end{split}
\end{equation}

\begin{remark}
Predicting $y_{N+1}$ with the mean $u_{N+1}$ is optimal. However, for $k\notin D$ and $k<N$, imputing $y_k$ with the mean $u_k$ is not always optimal. Indeed, suppose $u_k$ only depends on $p$ previous $y_i$'s, i.e. $u_k = f\left(\{y_i\}_{i=k-p}^{k-1}\right)$ for some $p\ge 1$. Suppose also that both $f$ and $\theta$ are known with no outliers, that is $y_i = \tilde{y}_i$ for all $i\in D$ and $y_k$ is the only non-observed entry. The task is to compute the optimal $y_{k}^*$ by using \eqref{min_prob_eq}. This amounts to solving
$$
y_{k}^* = \argmin_{y_{k}} \left[ -\sum_{i=1}^N \log(P(y_i|u_i,\theta_i))\right].
$$
If $k = N$, then it is clear that only the last term in the above sum contains $y_N$. This implies
$$
y_N^* = \argmin_{y_N} \left[-\log(P(y_N|u_N,\theta_N)) \right]= \argmax_{y_N} P(y_N|u_N,\theta_N). 
$$
On the other hand, if $1\le k < N$, then
\begin{eqnarray*}
y_{k}^* &=& \argmin_{y_k} \left[- \sum_{i=k}^{\min(k+p,N)}  \log(P(y_i|u_i,\theta_i))\right] \\
&=& \argmax_{y_k}\left[ \prod_{i=k}^{\min(k+p,N)}P(y_i|u_i,\theta_i)\right].
\end{eqnarray*}
The latter case shows that $\argmax_{y_k} P(y_k|u_k,\theta_k)$ is not always an optimal value for $y_k^*$. 
\end{remark}

The paper is organized as follows. Section \ref{prior_sec} recalls some related prior work that are most relevant. Section \ref{plar_sec} describes the proposed Poisson log-linear model having the uncertainty condition \eqref{c1} following the Poisson distribution and the correlation function \eqref{c2} following a log-linear function. One subproblem for solving \eqref{min_prob_eq} is to compute
$$
\inf_{t\in \R}\left\{ E_r(t)=\mu |t|^r + \frac{1}{2}|t-t'|^2\right\},
$$
for some fixed $t'\in \R$, $\mu>0$ and $0<r\le 1$. In \cite{nie2012robust}, Nie-etal proposed a method for solving this problem via solving a zero of a strictly convex function. For completeness, we go over in section \ref{prox_p_sec} a similar method for computing the proximal operator for $E_r(t)$. Section \ref{numeric_sec} goes over an algorithm to compute a minimizer for \eqref{min_prob_eq}. Section \ref{num_res} shows numerical results on simulated data to validate the proposed model. In Appendix \ref{appen_A} we show that the above minimization problem \eqref{min_prob_eq} is related to 
$$
\max_{y,f,\theta} P(\tilde{y}_D,y,f,\theta),
$$
for some fixed $\lambda>0$ and $r\in (0,1]$.

%=================================================================
%=================================================================
%=================================================================

\section{Prior Work}\label{prior_sec}

In \cite{huber1973robust}, Huber considered the classical least square problem of learning $p$ parameters $a_1,\cdots, a_p$ from $N$ observations $y_1,\cdots, y_N$ obeying the relation
\begin{equation}\label{ols_eq}
y_i = \sum_{j=1}^p x_{ij} a_j + e_i.
\end{equation}
Here $x_{ij}$ are known coefficients and $e_i\approx N(0,\sigma^2)$ are iid random Gaussian noise. For an autoregressive model, $x_{ij} = y_{i-j}$. Estimating the parameter $a=(a_j)_{j=1}^p\in \R^p$ amounts to minimizing the sum of squares
\begin{equation}\label{min_ols}
\min_{a\in \R^p}\left\{\sum_{i=1}^N \left|y_i - \sum_{j=1}^p x_{ij}a_j\right|^2 = \sum_{i=1}^N |e_i|^2\right\}.
\end{equation}

Let $X = (x_{ij})_{N\times p}$, $a = (a_j)_{p\times 1}$ and $u=Xa$, then the relative condition number measuring the sensitivity of $u$ with respect to perturbation of $a$ is \cite{trefethen1997numerical}
$$
\kappa_{a\rightarrow u} = \|X\|\frac{\|a\|}{\|Xa\|} \le \|X\|\|X^+\|
$$
where $\|\cdot\|$ is an arbitrary matrix norm and $X^{+}$ is the pseudo inverse of $X$ if exists.  Let $\sigma_1$ and $\sigma_p$ be the largest and smallest singular values of $X$ respectively, then by using $\|\cdot\| = \|\cdot\|_2$, one has $\kappa_{a\rightarrow u}  = \frac{\sigma_1}{\sigma_p}$. If $\kappa_{a\rightarrow u} $ is large, a small deviation in $a$ can create large deviation in the solution $u$ and hence it is important to obtain an accurate estimate for $a$. As noted in \cite{huber1973robust}, outliers affect the accuracy of the estimates.  Following Lecture 18 in \cite{trefethen1997numerical}, the condition number for measuring the sensitivity of $a$ with respect to perturbation of $y$ is $\kappa_{y\rightarrow a} = \kappa(X)/(\eta \cos(\theta))$, where $\eta = \|X\|\|a\|/\|Xa\|$ and $\theta = \cos^{-1}(\|u\|/\|y\|) = \cos^{-1}(\|u\|/\|u+e\|) $. The more noise and outliers in the system, the closer $\theta$ is to $\pi/2$. This leads to a large value of $\kappa_{y\rightarrow a}$.  Hence it is important to have good estimations of both $a$ and $y$ in the presence of missing data and outliers in the observation.

In \cite{huber1964robust,huber1973robust}, Huber proposed a robust alternative to \eqref{min_ols} by considering
\begin{equation}\label{min_robust}
\min_{a}\left\{\sum_{i=1}^N \rho\left(y_i - \sum_{j=1}^p x_{ij}a_j\right) = \sum_{i=1}^N \rho\left(e_i\right)\right\},
\end{equation}
where $\rho(x)$ is chosen so that it is less sensitive to large $|x|$. In particular, the proposed $\rho$ has the form
\begin{equation}\label{huber_rho}
\rho(x) = \left\{
\begin{matrix}
\frac{1}{2}|x|^2 & \mbox{ if } & |x|< c\\
c|x| - \frac{1}{2}c^2 & \mbox{ if } & |x|\ge c
\end{matrix}
\right.
\end{equation}
where $c$ is some chosen constant which is data dependent. From the definition of $\rho$ we see that if $|e_i|<c$, least square is performed in \eqref{min_robust} and hence the model respects the additive normal Gaussian noise assumption in \eqref{ols_eq}. When $|e_i|\ge c$, $e_i$ is no longer considered normal Gaussian but assumed to follow a Laplacian distribution of the form
$$
P(e_i) = f_c(e_i) = \left[C(c) e^{\frac{1}{2}c^2}\right]e^{-c|e_i|},
$$ 
where $C(c)$ is a constant such that $\int_{-\infty}^\infty f_c(x)\ dt = 1$. Note that the Laplacian distribution has a wider tail than a Gaussian distribution and hence allows for the existence of large $|e_i|$ better than the Gaussian distribution. Another popular robust choice for $\rho$ is the least absolute deviation \cite{bassett1978asymptotic} which amounts to having $\rho(e_i) = |e_i|$, which also follows a Laplacian distribution. 

The standard LASSO \cite{tibshirani1996regression} amounts to learning a sparse parameter vector $a$ via minimizing 
\begin{equation*}
\min_{a} \left\{ \frac{1}{2\sigma^2} \|y-Xa\|_2^2 + \mu \|a\|_1\right\}.
\end{equation*}
This model is still sensitive to outliers. A modification to this model introduces an extra variable $z$ representing outliers (see \cite{nguyen2013robust} and references there in):
\begin{equation*}
\min_{a,z} \left\{\frac{1}{2\sigma^2} \|Xa - y - z\|_2^2 + \mu\|a\|_1 + \lambda \|z\|_1 \right\}.
\end{equation*}
Suppose $u_i = f(y_{i-p},\cdots, y_{i-1})$. In \cite{mateos2012robust}, a robust nonparametric model is proposed:
\begin{equation*}
\min_{f,z}\left\{\sum_{i=1}^N |y_i-u_i-z|^2 + \mu \|f\|_{\H} + \lambda \|z\|_1 \right\},
\end{equation*}
where $\H$ is a reproducing kernel Hilbert space (RKHS) which includes Sobolev spaces. A slightly different approach is proposed in \cite{ganti2015learning}, where each $u_i$ is given by $f((Xa)_i)$, and both $a$ and $f$ are learned via solving
\begin{equation*}
\inf_{a,f} \left\{\sum_{i=1}^N |y_i - u_i|^2 +  \lambda \|a\|_1\right\}.
\end{equation*}

 In connection with the proposed model \eqref{min_prob_eq}, take $r = 1$ and let $u_i = f(\{y_j\}_{j\le i-1})= \sum_{j=1}^p x_{ij} a_j$ with $x_{ij} = y_{i-j}$ and
$$
P(y_i|u_i,\theta_i) = P(y_i|u_i,\sigma) = \frac{1}{\sigma\sqrt{2\pi}} e^{-\frac{|y_i-u_i|^2}{2\sigma^2}}
$$
for some fixed $\alpha>0$. Note that $f$ is completely determined by $a\in \R^p$ and estimating the parameter vector $a$ with the LASSO prior amounts to minimizing
\begin{equation*}
\min_{a,y} \left\{\frac{1}{2\sigma^2}\|y-Xa\|_2^2 +\mu \|a\|_1 +  \lambda\sum_{i\in D} |y_i-\tilde{y}_i|\right\}.
\end{equation*}
Here, the sparse vector representing outliers is $\{y_i-\tilde{y}_i\}_{i\in D}$. Note that the proposed method \eqref{min_prob_eq} is much more general to accommodate other types of noise in the data that is not additive (multiplicative Gaussian, Poisson, negative binomial, etc.)

There are quite a few existing methods on estimating the parameters in the presence of missing data, and from a high level perspective, they align with the following two approaches. 

The first approach iteratively imputes missing and unobserved data in some manner and then use the  imputed and observed data to estimate the parameters. These methods include mean imputation, expectation maximization (EM) \cite{dempster1977maximum}, multiple imputation \cite{rubin1976inference}, among others. See \cite{newman2003longitudinal, peng2006advance, horton2007much} for a survey of some of these methods. Matrix completion \cite{candes2009exact, candes2010power} is a form of imputation where missing entries in the data matrix are imputed under the assumption that the data matrix has low rank. The proposed imputation performed in Algorithm \ref{alg1} for solving \eqref{min_prob_eq} follows along the line the iterative approach of the EM method \cite{dempster1977maximum}; but instead of maximizing the expectation we maximize the likelihood.

 The second approach either doesn't impute or only imputes the necessary missing entries that the observed entries depend on. For instance in full information maximization likelihood (FIML) method \cite{finkbeiner1979estimation}, only complete data points are used as inputs to estimate the parameters. Suppose we are only interested in estimating the constant $a_0$ with $p=q=0$, then all observed data points are complete. However, for $p>0$ and suppose one in every consecutive $p$ points are missing then the set of complete data points is empty and hence the FIML method is not applicable. The non-negative definite covariance method \cite{loh2012high, choi2013investigation} only considers observed data points as inputs as opposed to complete data points. Here the observed data points may depend on the missing data, however this method sets this dependency to zero, that is imposing zero boundary conditions on the unobserved entries for which some of the observed entries may depend on (see section 3.2.1 in \cite{choi2013investigation}.) Further modification was introduced to acquire non-negative definite condition for the covariance matrix. This second approach can also be applied to our problem and it is more appropriate when the missing entries are systematic as oppose to random. To motivate the problem, consider the case where the correlation function is a constant, that is $u_i = c$ for all $i$. Assuming, that the observed series $\widetilde{y}_D$ has no outliers, then $c$ can be approximated as the mean $\frac{1}{|D|}\sum_{i\in D} \tilde{y}_i$ and there is no need to impute $y_i$ for all $i\notin D$. Similarly, suppose $D = \{k,\cdots, N\}$ and $u_i = f(y_{i-1})$, then it is only necessary to impute $y_{k-1}$ as opposed to $y_j$ for all $j<k$. One can view $y_{k-1}$ as an unknown boundary condition. Thus in this second approach, the minimization problem becomes
\begin{equation}\label{min_prob_eq2}
\begin{split}
\min_{f,y_{\bar{D}},u_{D'},\theta}\Big\{
J(y_{\bar{D}},u_{D'},f,\theta) &=  -\sum_{i\in D} \log(P(y_i|u_i,\theta_i)) - \log(P(f)\\
& + \lambda \sum_{i\in D}|y_i - \tilde{y}_i|^r
\Big\},
\end{split}
\end{equation}
where $\bar{D}$ consists of $D$ and all of the indices $j$ such that $u_i$ depends on $y_j$ for some $i\in D$, and $D'$ consists of all the indices $k$  such that $u_i$ depends on $u_k$ for some $i\in D$. Note that the difference between \eqref{min_prob_eq} and \eqref{min_prob_eq2} is that in \eqref{min_prob_eq2} the sum is only over observed indices as opposed to all indices (observed and unobserved.) It would be interesting to compare this approach with the proposed method \eqref{min_prob_eq} and we leave this for a future work.

\section{Poisson Log-Linear Model}\label{plar_sec}

In this section, we consider the Poisson distribution for the conditional uncertainty condition  and using a log-linear correlation function to model $f$ in \eqref{min_prob_eq}. In particular, we suppose that $y_i$ is only conditioned on $u_i$ and that it follows the Poisson distribution, 
\begin{equation}\label{p_eq}
P(y_i|u_i) = \frac{u_i^{y_i} e^{-u_i}}{\Gamma(y_i+1)},
\end{equation}
where $\Gamma(y_i+1) = y_i!$ whenever $y_i$ is a nonnegative integer. Note that \eqref{p_eq} is defined for all $y\ge 0$. We consider $f$ to be a log-linear correlation function satisfying
\begin{equation}\label{u_eq}
\begin{split}
\log(u_i + 1) = a_0 + \sum_{k=1}^p a_k \log(y_{i-k}+1)+ \sum_{k=1}^q b_k \log(u_{i-k}+1),
\end{split}
\end{equation}
 for some $p,q\ge 0$ with the constraint $u_i\ge 0$. In other words,
$$
u_i = f\left(\{y_j\}_{j\le i-1},\{u_j\}_{j\le i-1}\right) = \max(\exp(\log(u_i+1)) - 1,0), 
$$
where $\log(u_i+1)$ is defined as in \eqref{u_eq}. Since $f$ is completely determined by $a_0$, $a$ and $b$, we see that the series $u$ is completely determined by $a_0,a,b$ and the series $y$. 

The prior on $f$ is now given by
$$
P(f) = P(a_0)\left[\prod_{k=1}^p P(a_k)\right]\left[\prod_{k=1}^q P(b_k)\right].
$$
Here we assume all the parameters are independent from each other, and follow a family of exponential probability distributions
\begin{equation}\label{prior_f_eq}
f_{\mu,s}(x) = C(\mu,s)e^{-\mu|x|^{s}},\mbox{ for some } 0<s\le 1,
\end{equation} 
where $C(\mu,s)$ is chosen such that $\int_{-\infty}^\infty f_{\mu,s}(x)\ dx = 1$. $s = 1$ corresponds to the LASSO constraint \cite{tibshirani1996regression} and $s\in (0,1)$ corresponds to the Bayesian bridge constraint proposed in \cite{frank1993statistical, polson2014bayesian}. 

Combining \eqref{p_eq}-\eqref{prior_f_eq} into \eqref{min_prob_eq}, the proposed variational problem becomes

\begin{equation}\label{pois_min_eq}
\begin{split}
&\min_{a_0,a,b,y} J(a_0,a,b,y),\mbox{ where }\\
J(a_0,a,b,y) &= \sum_{i=1}^N \left[u_i - y_i\log(u_i) + \log(\Gamma(y_i+1)) \right]+ \lambda\sum_{i\in D}|y_i - \tilde{y}_i|^{r}\\
&+ \mu \left(\sum_{k=1}^p |a_k|^{s} + \sum_{k=1}^q |b_k|^{s}\right),
\end{split}
\end{equation}
for some $0<r,s\le 1$, and $\mu,\lambda>0$.

\begin{remark}
It is possible to minimize the energy in \eqref{pois_min_eq} over the set of nonnegative integer-valued series $y$. However, this set is not convex. To overcome this non-convexity we extend \eqref{p_eq} to all nonnegative real-valued series $y$ and therefore use $\Gamma(y_i+1)$ as oppose to $y_i!$. We remark that this extension is not the continuous version of the Poisson distribution proposed in \cite{marsaglia1986incomplete, ilienko2013continuous}. Given $\lambda\ge 0$, the cumulative probability distribution for the continuous Poisson is defined as $F_\lambda(t) = 0$ if $t=0$ and
$$
F_\lambda(t) = \frac{\Gamma(t,\lambda)}{\Gamma(t)},\mbox{ for } t>0.
$$
Here
$$
\Gamma(t,\lambda) := \int_\lambda^\infty e^{-\tau} \tau^{t-1}\ d\tau
$$
is the incomplete Gamma function. Let the probability distribution function $f_\lambda$ be defined such that
$$
F_\lambda(t) = \int_0^t f_\lambda(\tau)\ d\tau.
$$     
It can be shown that for all $t\ge 0$,
$$
\int_t^{t+1} f_\lambda(\tau)\ d\tau = \frac{e^{-\lambda}\lambda^t}{\Gamma(t+1)}.
$$
Thus instead of \eqref{p_eq}, one can use $P(y_i|u_i) = f_{u_i}(y_i)$ or
$$
P(y_i|u_i) =   \frac{e^{-u_i}u_i^{\lfloor y_i\rfloor}}{\Gamma(\lfloor y_i\rfloor+1)},
$$
where $\lfloor t\rfloor$ is the largest integer that is less than or equal to $t$. 
\end{remark}

%============================================================
\section{Proximal Mapping for $\ell^r$, $0<r<1$}\label{prox_p_sec}

One of the ingredients for computing \eqref{pois_min_eq} is to solve a subproblem
\begin{equation}\label{lp_l2}
\shrink_r(t',\mu) =J_{\d E_r}(t')  := \argmin_{t\in \R} \left\{E_{r}(t) = \mu |t|^r + \frac{1}{2}|t-t'|^2\right\},
\end{equation}
for some $0<r\le 1$, $\mu>0$ and $t'\in \R$.  For $r=1$, it was shown in \cite{tibshirani1996regression} that $J_{\d E_1}(t')$ is given by 
\begin{equation}\label{shrink_1}
J_{\d E_1}(t') = \shrink_1(t',\mu) :=  \sign(t')(|t'|-\mu)_+,
\end{equation}
where $x_+ = \max(x,0)$. For $0<r<1$, a common approach for solving \eqref{lp_l2} is to consider a regularized version via
$$
\min_{t\in \R} \left\{H_{r}(t) = \mu |t+\epsilon|^r + \frac{1}{2}|t-t'|^2\right\},
$$
for some $\epsilon>0$. The minimizer for $H_r(t)$ is approximated  by $t_\infty = \lim_{n\rightarrow \infty} t_n$ where
$$
t_{n+1} = t_n - dt\frac{d H_r}{d t}(t_n),
$$
with some initial guess $t_0$. Since $H_r(t)$ is nonconvex, $t_\infty$ may not be a global minimizer. In remark \ref{shrink_rem} below we show that for a range of $\mu$, with the initial guess $t_0=t'$, the above iteration will converge to $t_\infty$ that is not a global minimizer.

There are explicit formula for $\shrink_r$ when $r=1/2$ or $r=2/3$ \cite{zhongben2012regularization}, but for general $r\in(0,1)$, no closed-form expression for $\shrink_r$ exists. In \cite{nie2012robust}, Nie-Etal proposed a method for solving $\shrink_r$ via computing a zero of a strictly convex function using Newton method. For completeness, we present here a method similar to the one proposed in \cite{nie2012robust}. 

Recall \eqref{lp_l2} and for simplicity assume $t'\ge 0$ and denote by $t^*$ the global minimizer for $E_r(t)$. First, note that $t^*\ge 0$. Now, for $t>0$ we get
$$
\frac{d}{d t}E_r(t) = \mu r t^{r-1} + (t-t').
$$
Define
$$
g_r(t) = \mu r - t' t^{1-r} + t^{2-r},\mbox{ for } t \ge 0.
$$
Note also that for $t>0$, $g_r(t) = t^{1-r} \frac{d E_r}{dt}(t)$ which shows that $g_r(t) = 0$ if and only if $\frac{d }{d t}E_r(t)=0$. Suppose $t^*>0$. This implies that $t^*$ must satisfy
\begin{equation}\label{lr_deri_eq}
g_r(t^*) = \mu r -t' (t^*)^{1-r}+ (t^*)^{2-r} = 0.
\end{equation}
On $(0,\infty)$, we get
$$
g_r'(t) = -(1-r) t' t^{-r} + (2-r)t^{1-r},
$$
and
$$
g_r''(t) = (1-r)r t' t^{-r-1} + (2-r)(1-r)t^{-r},
$$
which is strictly greater than $0$. This implies $g_r$ is strictly convex on $(0,\infty)$ and achieving its minimal value at $t_0 = \frac{1-r}{2-r}t'$. Moreover, we have
$$
g_r(0) = g_r(t') = \mu r>0,\mbox{ and } 
$$
$$
g_r(t_0) = \mu r +  \left(t_0- t'\right)t_0^{1-r} < \mu r. 
$$ 
This implies the followings:
\begin{enumerate}
\item If $g_r(t_0) > 0$, that is
\begin{equation}\label{case1_eq}
\mu > \frac{1}{r}  \left(t' - t_0\right) t_0^{1-r},
\end{equation}
then $g_r$ has no zeros on $(0,\infty)$. This shows that a global minimizer $t^*>0$ does not exist. 
\item If $g_r(t_0) < 0$, that is
\begin{equation}\label{case2_eq}
\mu < \frac{1}{r}  \left(t' -t_0\right) t_0^{1-r},
\end{equation}
then $g_r$ has exactly two zeros $t_1\in (0,t_0)$ and $t_2\in (t_0,t')$. Since $0<r<1$ and $\mu>0$, the function $E_r(t)$ is strictly increasing near zero. This implies that the zero $t_1$ is not a local minimizer for $E_r$, and hence $t^* = t_2$. 
\item If $g_r(t_0) = 0$, that is
\begin{equation}\label{case3_eq}
\mu = \frac{1}{r}  \left(t' - t_0\right) t_0^{1-r},
\end{equation}
then $g_r$ has exactly one zero at $t_0$. Since $t_0$ is the first zero of $g_r$ and $E_r$ is strictly increasing near $0$, we get that  a global minimizer $t^*>0$ does not exist.
\end{enumerate}

\begin{remark}\label{shrink_rem}
Cases \eqref{case1_eq} and \eqref{case3_eq} correspond to having $E_r(t)$ strictly increasing on $[0,\infty)$, and hence $t^*=0$ is the global minimizer.  Note in the case \eqref{case3_eq}, $t_0$ is a saddle point of $E_r$. As for the case \eqref{case2_eq}, $E_r(t)$ has one positive local minimizer at $t_2$ and therefore the global minimizer $t^* = \argmin_{t\in \{0,t_2\}} E_r(t)$. It is possible that $E_r(0) = E_r(t_2)$ for some $\mu>0$. In this case there is no uniqueness to the global minimizer of $E_r$.
\end{remark}

Figure \ref{prox_p_fig} shows the plots of $E_r(t)$ and the corresponding $g_r(t)$ with $r=\frac{1}{2}$ and $t^\prime = 5$ for various choices of $\mu$. The green lines correspond to the value $E_r(0)$ in $(i)$ and $0$ in $(ii)$. Let $\mu_0 =  \frac{1}{r}  \left(t' - t_0\right) t_0^{1-r}$. In the cases where $\mu = 2\mu_0, \mu_0,\frac{3}{4}\mu_0$, the global minimizer for $E_r$ is $0$. However, for $\mu = \frac{1}{4}\mu_0$, the second zero of the corresponding $g_r$ is the global minimizer for $E_r$.

\begin{figure}
\begin{center}
$\underset{(i)}{\includegraphics[scale=0.45]{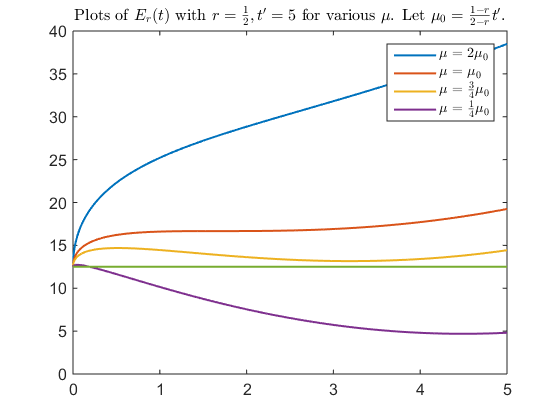}}$
$\underset{(ii)}{\includegraphics[scale=0.45]{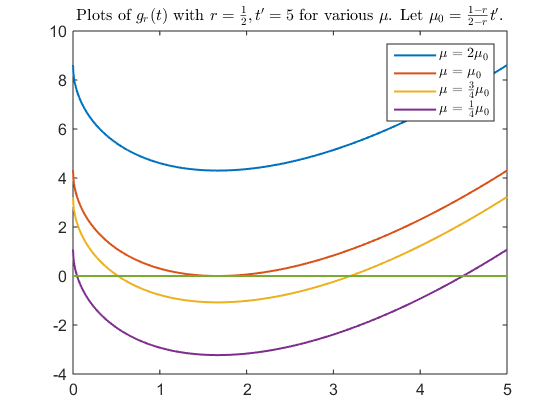}}$
\caption{Plots of $E_r(t)$ and the corresponding $g_r(t)$ with $r=\frac{1}{2}$ and $t^\prime = 5$ for various choices of $\mu$. The green lines correspond to the value $E_r(0)$ in $(i)$ and $0$ in $(ii)$.}
\label{prox_p_fig}
\end{center}
\end{figure}

From the above remark, the shrinkage operator in \eqref{lp_l2} for some $t'\ge 0$ is given by
\begin{equation}\label{shrink_pos_eq}
\shrink_r(t',\mu) =  \left\{
\begin{matrix}
0 & \mbox{ if }& g_r(t_0) \ge 0,\\
\argmin_{t\in \{0,t_2\}} \{E_r(t)\} & \mbox{ if } & g_r(t_0) < 0.
\end{matrix}
\right.
\end{equation}
Computing for the second zero $t_2\in (t_0,t')$ of $g_r$ (assuming $g_r(t_0) < 0$) is fast and straightforward since $g_r$ is strictly convex. For instance, one can use the Newton method as follow:
\begin{algorithm}
Newton method for computing the second zero of $g_r$ (assuming $t'\ge 0$ and $g_r(t_0) < 0$):
\begin{enumerate}
\item Set $t_0  = t'$, $t_1 = t_0 - g_r(t_0)/g_r'(t_0)$ and $\epsilon = $ small.
\item while $|t_m - t_{m-1}| > \epsilon$.
\begin{itemize}
\item $t_{m+1} = t_m - g_r(t_m)/g_r'(t_m)$.
\end{itemize}
\item end while. 
\end{enumerate}
\end{algorithm} 
In our numerical simulation, the above algorithm converges in $\le 5$ Newton iterations with $\epsilon = 1e$-$6$.

For general $t'\in \R$, we have 
\begin{equation}
\shrink_r(t',\mu) = \sign(t') \shrink_r(|t'|,\mu)
\end{equation}
 where $\shrink_r(|t'|,\mu)$ is given in \eqref{shrink_pos_eq}.

\begin{remark}
$E_r(t)$ can also be defined for $r=0$. In this case, we get $E_0(0) = \frac{1}{2}(t')^2$ and $E_0(t) = \mu + \frac{1}{2} (t-t')^2$ for $t\neq 0$. This implies for $t'\neq 0$, $\frac{dE_0}{dt}(t) = 0$ whenever $t = t'$. Thus we obtain the following shrinkage operator (hard thresholding) 
\begin{equation}
\shrink_0(t',\mu) = \left\{
\begin{matrix}
0 &\mbox{ if } & \frac{1}{2}(t')^2 < \mu\\
t' &\mbox{ if } & \frac{1}{2}(t')^2 \ge \mu.
\end{matrix}
\right.
\end{equation}
Note that if $E_0(0) = \frac{1}{2}(t')^2 = \mu = E_0(t')$ then there is no uniqueness of minimizer. Here we choose $t'$ to be the minimizer but choosing $0$ is also appropriate.
\end{remark}
%============================================================

%============================================================
\section{Numerical Implementation}\label{numeric_sec}

There are numerous numerical methods that can be used to solve a minimizer for \eqref{pois_min_eq} (see \cite{goldstein2014fast} and references there in.) We mention in particular the FISTA algorithm \cite{beck2009fast} which provides a global rate of convergence when the minimizing energy is convex. The functional in \eqref{pois_min_eq} is related blind-deconvolution which is jointly nonconvex even in the case when $r=s=1$. When both $r$ and $s$ are rationals in $(0,1)$, numerical schemes such as PALM \cite{bolte2014proximal} or Block Prox-Linear Method \cite{xu2014globally} provides global convergence. Numerical schemes FISTA and PALM are described in algorithms \ref{fista}-\ref{palm}. Algorithm \ref{alg1} is a combination of the two where we apply the time-step updating criteria from FISTA to the proximal alternating scheme in PALM. We will make comparisons between PALM and Algorithm \ref{alg1} via numerical simulations.

We rewrite the energy from \eqref{pois_min_eq} as 
$$
J(a_0,a,b,y) = H(a_0,a,b,y) + G_1(a) + G_2(b) + G_3(y_D-\tilde{y}_D)
$$ 
where
\begin{eqnarray*}
H(a_0,a,b,y) &=& \sum_{i=1}^N\left[u_i - y_i\log(u_i) + \log(\Gamma(y_i+1))\right] \\
G_1(a)&=& \mu \sum_{k=1}^p |a_k|^{s},\ G_2(b) = \mu\sum_{k=1}^q |b_k|^{s},\\
G_3(y_D-\widetilde{y}_D) &=& \lambda \sum_{i\in D} |y_i - \tilde{y}_i|^{r}.
\end{eqnarray*}

Let 
$$
\nabla H = \left(\nabla_{a_0} H,\nabla_{a} H, \nabla_{b} H, \nabla_{y} H\right),
$$
where 
\begin{eqnarray*}
\nabla_{a_0}H := \frac{\d H}{\d a_0},& & \nabla_{a} H := \left(\frac{\d H}{\d a_k}\right)_{k=1}^p,\\
\nabla_{b} H := \left(\frac{\d H}{\d b_k}\right)_{k=1}^q, & &\nabla_{y} H := \left(\frac{\d H}{\d y_i}\right)_{k=1}^N.
\end{eqnarray*}
 The proximal mappings for $G_i$'s are defined as:
\begin{eqnarray*}
J_{\d G_1}(a') &:=& \argmin_{a}\left\{G_1(a) + \frac{1}{2} \|a-a'\|_2^2\right\} = (\shrink_s(a'_k,\mu))_{k=1}^p,\\
J_{\d G_2}(b') &:=& \argmin_{b}\left\{G_2(b) + \frac{1}{2} \|b-b'\|_2^2\right\} = (\shrink_s(b'_k,\mu))_{k=1}^q, \\
J_{\d G_3}(x_D') &:=& \argmin_{x_D}\left\{G_3(x_D) + \frac{1}{2} \|x_D-x_D'\|_2^2\right\}\\
&=& \left(\shrink_r(x_i',\lambda)\right)_{i\in D},
\end{eqnarray*}
where $x$ is a vector in $\R^N$ such that $x_D = y_D - \widetilde{y}_D$. In the followings, denote by $\P_D(y)$ and $\P_{D^c}(y)$ the projection of $y\in \R^N$ on to $D$ and its complement $D^c$, respectively.

\begin{algorithm}\label{fista}
FISTA \cite{beck2009fast} applying to \eqref{pois_min_eq}.
\begin{itemize}
\item Initialize: $c_{0,2} = a_{0,1}$, $c_2 = a_1$, $d_2 = b_1$, $z_2 = y_1$, $\alpha_2=1$, $\tau= $ small enough and $\epsilon > 0$ (tolerance.)
\item Do
\begin{enumerate}
\item $\alpha_{m+1} = (1 + \sqrt{1 + 4\alpha_m^2})/2$.
\item $a_{0,m} = c_{0,m} - \tau \nabla_{a_0} H(c_{0,m},c_m,d_m,z_m)$. 
\item $c_{0,m+1} = a_{0,m} + \frac{\alpha_m-1}{\alpha_{m+1}}(a_{0,m} - a_{0,m-1})$.
\item $a_m= J_{\tau\d G_1} \left[c_m - \tau \nabla_a H(c_{0,m},c_m,d_m,z_m)\right]$.
\item $c_{m+1} =  a_{m} + \frac{\alpha_m-1}{\alpha_{m+1}}(a_{m} - a_{m-1})$.
\item $b_m= J_{\tau\d G_2} \left[d_m - \tau \nabla_b H(c_{0,m},c_{m},d_m,z_m)\right]$.
\item $d_{m+1} =  b_{m} + \frac{\alpha_m-1}{\alpha_{m+1}}(b_{m} - b_{m-1})$.
\item $\P_{D^c}(y_m) = \P_{D^c}\left(z_m - \tau \nabla_y H(c_{0,m},c_{m},d_{m},z_m)\right)$.
\item  $\P_{D}(y_m) = J_{\tau\d G_3} \left[\P_{D}\left(z_m - \tau \nabla_y H(c_{0,m},c_{m},d_{m},z_m)\right) - \widetilde{y}_D\right] + \widetilde{y}_D$.  
\item $z_{m+1} = y_m + \frac{\alpha_m-1}{\alpha_{m+1}} (y_m - y_{m-1})$. 
\end{enumerate}
\item while $|J_{m} - J_{m-1}| > \epsilon$.
\item Set $(a_0^*,a^*,b^*,y^*) = (a_{0,m},a_m,b_m,y_m)$.
\end{itemize}
\end{algorithm}

\begin{algorithm}\label{palm}
PALM \cite{bolte2014proximal} applying to \eqref{pois_min_eq}.
\begin{itemize}
\item Initialize: $a_{0,1}$, $a_1$, $b_1$, $y_1$, $\tau= $ small enough and $\epsilon > 0$ (tolerance.)
\item Do
\begin{enumerate}
\item $a_{0,m+1} = a_{0,m} - \tau \nabla_{a_0} H(a_{0,m},a_m,b_m,y_m)$. 
\item $a_{m+1}= J_{\tau\d G_1} \left[a_m - \tau \nabla_a H(a_{0,m+1},a_m,b_m,y_m)\right]$.
\item $b_{m+1}= J_{\tau\d G_2} \left[b_m - \tau \nabla_b H(a_{0,m+1},a_{m+1},b_m,y_m)\right]$.
\item $\P_{D^c}(y_{m+1}) = \P_{D^c}\left(y_m - \tau \nabla_y H(a_{0,m+1},a_{m+1},b_{m+1},y_m)\right)$.
\item  $\P_{D}(y_{m+1}) = J_{\tau\d G_3} \left[\P_{D}\left(y_m - \tau \nabla_y H(a_{0,m+1},a_{m+1},b_{m+1},y_m)\right) - \widetilde{y}_D\right] + \widetilde{y}_D$.  
\end{enumerate}
\item while $|J_{m+1} - J_{m}| > \epsilon$.
\item Set $(a_0^*,a^*,b^*,y^*) = (a_{0,m},a_m,b_m,y_m)$.
\end{itemize}
\end{algorithm}

\begin{algorithm}\label{alg1}
Computing an optimal $(a_0^*,a^*,b^*,y^*)$ for \eqref{pois_min_eq}.
\begin{itemize}
\item Initialize: $c_{0,2} = a_{0,1}$, $c_2 = a_1$, $d_2 = b_1$, $z_2 = y_1$, $\alpha_2=1$, $\tau= $ small enough and $\epsilon > 0$ (tolerance.)
\item Do
\begin{enumerate}
\item $\alpha_{m+1} = (1 + \sqrt{1 + 4\alpha_m^2})/2$.
\item $a_{0,m} = c_{0,m} - \tau \nabla_{a_0} H(c_{0,m},c_m,d_m,z_m)$. 
\item $c_{0,m+1} = a_{0,m} + \frac{\alpha_m-1}{\alpha_{m+1}}(a_{0,m} - a_{0,m-1})$.
\item $a_m= J_{\tau\d G_1} \left[c_m - \tau \nabla_a H(c_{0,m+1},c_m,d_m,z_m)\right]$.
\item $c_{m+1} =  a_{m} + \frac{\alpha_m-1}{\alpha_{m+1}}(a_{m} - a_{m-1})$.
\item $b_m= J_{\tau\d G_2} \left[d_m - \tau \nabla_b H(c_{0,m+1},c_{m+1},d_m,z_m)\right]$.
\item $d_{m+1} =  b_{m} + \frac{\alpha_m-1}{\alpha_{m+1}}(b_{m} - b_{m-1})$.
\item $\P_{D^c}(y_m) = \P_{D^c}\left(z_m - \tau \nabla_y H(c_{0,m+1},c_{m+1},d_{m+1},z_m)\right)$.
\item  $\P_{D}(y_m) = J_{\tau\d G_3} \left[\P_{D}\left(z_m - \tau \nabla_y H(c_{0,m+1},c_{m+1},d_{m+1},z_m)\right) - \widetilde{y}_D\right] + \widetilde{y}_D$.  
\item $z_{m+1} = y_m + \frac{\alpha_m-1}{\alpha_{m+1}} (y_m - y_{m-1})$. 
\end{enumerate}
\item while $|J_{m} - J_{m-1}| > \epsilon$.
\item Set $(a_0^*,a^*,b^*,y^*) = (a_{0,m},a_m,b_m,y_m)$.
\end{itemize}
\end{algorithm}

Recall,
\begin{eqnarray*}
u_i &=& \exp(\log(u_i+1)) - 1\\
&=& \exp\left(a_0 + \sum_{k=1}^p a_k \log(y_{i-k}+1)+ \sum_{k=1}^q b_k \log(u_{i-k}+1)\right) - 1.
\end{eqnarray*}
The differentials of $H$ with respect to its variables are as follows.
\begin{equation*}
\begin{split}
\frac{\d H}{\d a_0} = \sum_{i=1}^N \left[ \frac{\d u_i}{\d a_0} - \frac{y_i}{u_i}\frac{\d u_i}{\d a_0} \right],
\end{split}
\end{equation*}
where
\begin{equation*}
\begin{split}
&\frac{\d u_i}{\d a_0} = (u_i+1) \frac{\d}{\d a_0}(\log(u_i + 1)),\mbox{ and }\\
&\frac{\d}{\d a_0}(\log(u_i + 1)) = 1 + \sum_{k=1}^q b_k  \frac{\d}{\d a_0}(\log(u_{i-k} + 1)). 
\end{split}
\end{equation*}
Therefore,
\begin{equation}
\begin{split}
\frac{\d H}{\d a_0} = \sum_{i=1}^N \left[\frac{ u_i - y_i}{u_i}\right](u_i+1).
\end{split}
\end{equation}
We have
\begin{equation}
\frac{\d H}{\d a_k} = \sum_{i=1}^N \left[\frac{ u_i - y_i}{u_i}\right](u_i+1)\frac{\d}{\d a_k}(\log(u_i+1)),
\end{equation}
where 
$$
\frac{\d}{\d a_k}(\log(u_i+1)) = \log(y_{i-k}+1) + \sum_{\ell=1}^q b_\ell  \frac{\d}{\d a_k}(\log(u_{i-\ell} + 1)) 
$$
Similarly,
\begin{equation}
\frac{\d H}{\d b_k} = \sum_{i=1}^N \left[\frac{ u_i - y_i}{u_i}\right](u_i+1)\frac{\d}{\d b_k}(\log(u_i+1)),
\end{equation}
where 
$$
\frac{\d}{\d b_k}(\log(u_i+1)) = \log(u_{i-k}+1) + \sum_{\ell=1}^q b_\ell  \frac{\d}{\d b_k}(\log(u_{i-\ell} + 1)) 
$$
Lastly,
\begin{equation}
\begin{split}
\frac{\d H}{\d y_i} &= -\log(u_i) + \frac{\Gamma'(y_i+1)}{\Gamma(y_i+1)} \\
&+ \sum_{j=1}^N \left[\frac{u_j-y_j}{u_j}\right](u_j+1) \frac{\d}{\d y_i}(\log(u_j+1)),
\end{split}
\end{equation}
where
\begin{eqnarray*}
\frac{\d}{\d y_i}(\log(u_j+1)) &=& \sum_{k=1}^q b_k \frac{\d}{\d y_i}(\log(u_{j-k}+1)) \\
&+& 
\left\{
\begin{matrix}
\frac{a_{j-i}}{y_i+1} & \mbox{ if } 1\le j-i \le p\\
0 &\mbox{ otherwise.} &
\end{matrix}
\right.
\end{eqnarray*}

\section{Numerical results}\label{num_res}

In this section, we validate the model \eqref{pois_min_eq} with simulated data. We argue that the regularization on the parameters with $\ell^s$, $0<s\le 1$, and the sparsity constraint using $\ell^r$, $0<r\le1$, are all crucial and necessary in estimating the parameters accurately. Throughout this section, we simulate data according to conditions \eqref{p_eq} and \eqref{u_eq} iteratively using $p=6$, $q=0$ and the true parameters 
\begin{eqnarray*}
\tilde{a}_0 &=& 1,\\
\tilde{a} &=& (\tilde{a}_1,\tilde{a}_2,\tilde{a}_3,\tilde{a}_4,\tilde{a}_5) \\
&=& (0.25,-0.5,0, 0,-0.5,0.5),
\end{eqnarray*}
with the size of the series $N=1000$. In all the figures below, $\tilde{y}$ and $\tilde{u}$ are the simulated observed and true mean series using these parameters. We then apply partial series of $\tilde{y}$ to the model described in \eqref{pois_min_eq} to reconstruct the extended series $y$, its mean $u$ and the parameters $a_0$, $a$. Also plotted in these figures is the vector $D$ with the following interpretation: $D_i = 1$ implies $\tilde{y}_i$ is observed and $D_i=0$ implies $\tilde{y}_i$ is unobserved.

\begin{example}
Figure \ref{ex0_fig1} shows a comparison of performance between Algorithms \ref{palm} and \ref{alg1} for 100 simulations. For each simulation, only $75\%$ of entries are observed and among these entries $2.5\%$ are contaminated.  In both algorithms, the parameters used are $\lambda = 5, r = 0.5,\mu = 30$ and $s = 1$. For Algorithm \ref{palm} we use $\tau = 1e$-$4$, and for Algorithm \ref{alg1} we use $\tau = 1e$-$5$. Even with a smaller $\tau$, Algorithm \ref{alg1} converges in 800 iterations on average, where as it takes on average 10,000 iterations for Algorithm \ref{palm} to converge. From the plots we observe that both algorithms provide similar statistics on the estimated parameters.
\end{example}

\begin{figure}
\begin{center}
\includegraphics[scale=0.45]{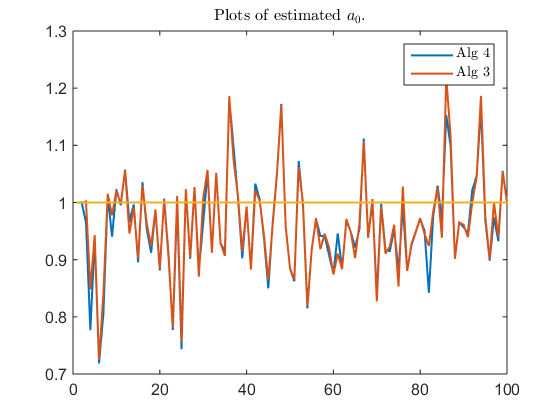}\\
\includegraphics[scale=0.45]{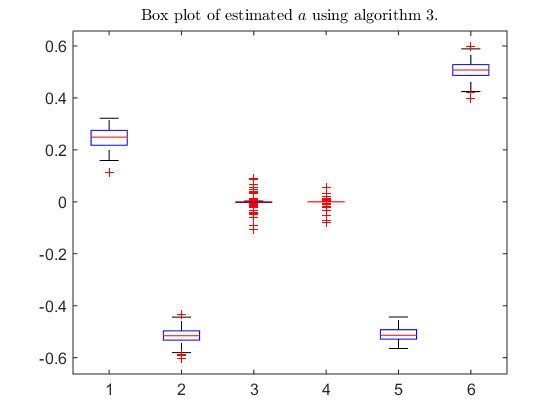}
\includegraphics[scale=0.45]{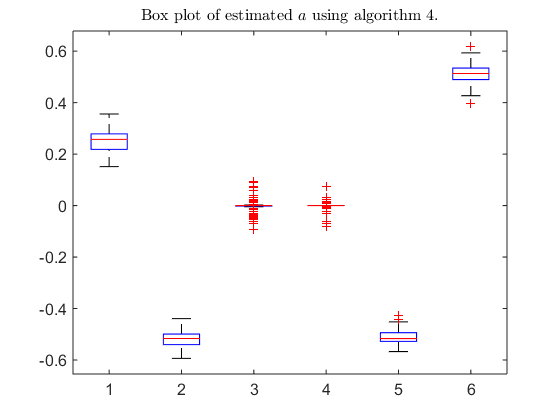}
\caption{Plots of estimated parameters $a_0$ and $a$ using Algorithm \ref{palm} and \ref{alg1} for 100 simulations when $75\%$ of the data is observed and among the observed entries $2.5\%$ are contaminated. In both algorithms, the parameters used are $r = 1/2$, $\lambda = 5$, $s = 1.0$ and $\mu = 30$. From the plots we observe that both algorithms provide similar statistics on the reconstructed parameters. However, Algorithm \ref{alg1} converges in about $800$ iterations where as it takes about $10,000$ iterations for Algorithm \ref{palm} to converge.}
\label{ex0_fig1}
\end{center}
\end{figure}

\begin{example}\label{ex1}
In this example, using Algorithm \ref{alg1} we compare the results of estimating the parameters $a_0$ and $a$ with and without using regularization on $a$. Figures \ref{ex1_fig1}-\ref{ex1_fig3} show the plots of the estimated $a_0$ and $a$ for three different amounts of observations (100\%, 75\% and 50\%). In all these cases we use $r=1/2, \lambda = 5$ and $s=1$. The values of $\mu$ changes depending on the amount of missing data. As the amount of missing data increases, the parameter estimation performance deteriorates when no regularization on $a$ is used (i.e. $\mu=0$). However, with $\ell^s$ regularization on $a$, the estimated parameters are much closer to the true values. 
\end{example}

\begin{figure}
\begin{center}
\includegraphics[scale=0.45]{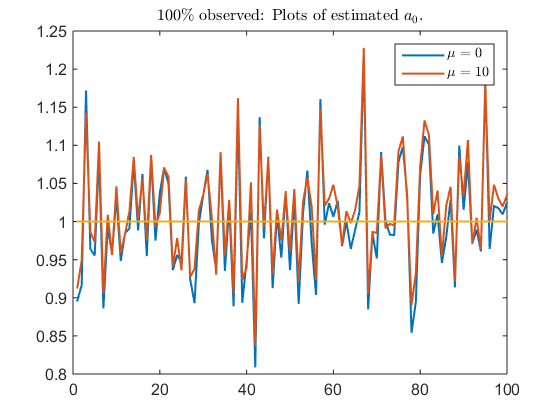}\\
\includegraphics[scale=0.45]{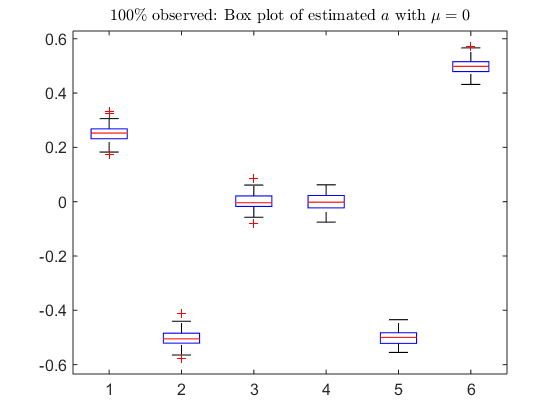}
\includegraphics[scale=0.45]{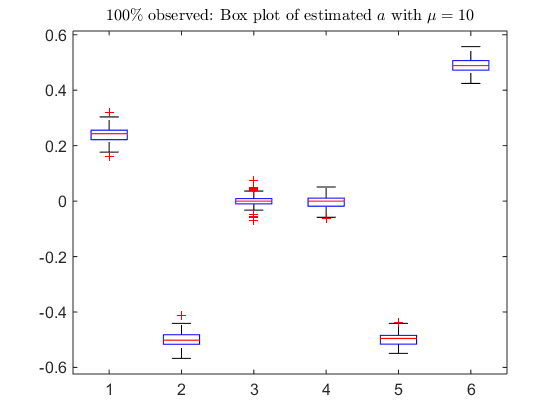}
\caption{Plots of estimated parameters $a_0$ and $a$ for 100 simulations when $100\%$ of the data is observed with $\mu = 0$ (left) and $\mu=10$ (right). The other parameters are $r = 1/2$, $\lambda = 5$, $s = 1.0$. From the plots, we observe that there is no significant difference between having $\mu=0$ (without regularization on $a$) and $\mu = 10$ (with $\ell^s$ regularization on $a$).}
\label{ex1_fig1}
\end{center}
\end{figure}

\begin{figure}
\begin{center}
\includegraphics[scale=0.45]{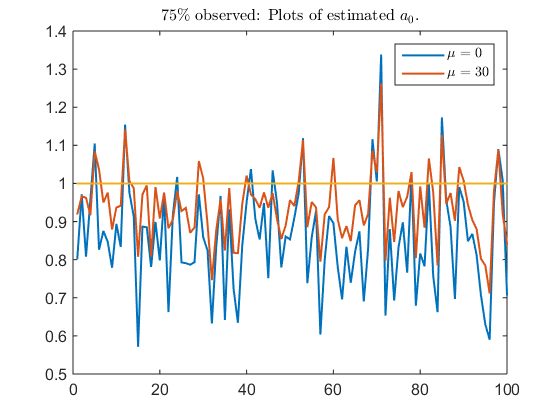}\\
\includegraphics[scale=0.45]{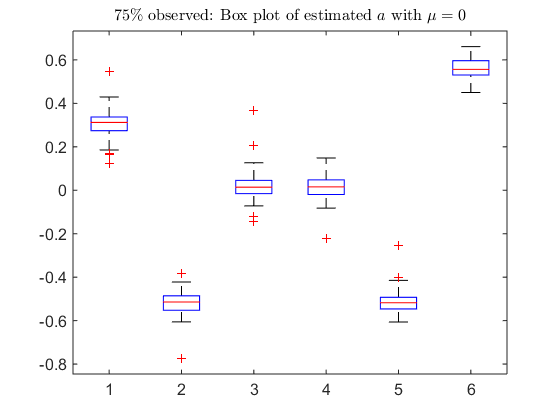}
\includegraphics[scale=0.45]{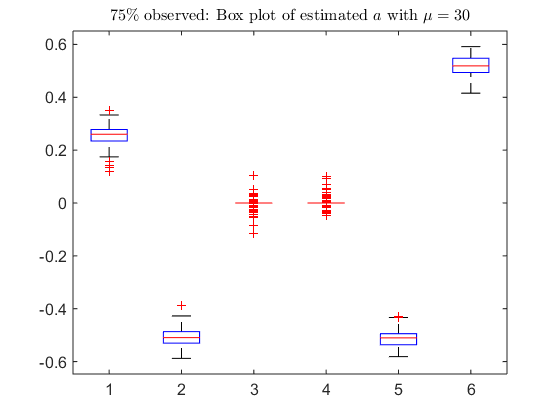}
\caption{Plots of estimated parameters $a_0$ and $a$ for 100 simulations when $75\%$ of the data is observed randomly with $\mu = 0$ (left) and $\mu=30$ (right). The other parameters are $r = 1/2$, $\lambda = 5$, $s = 1.0$. From the plots, we observe some improvements for having $\mu = 30$ over having $\mu=0$. The box plots also show that there is less variation for the estimated parameter $a$ when using regularization with $\mu = 30$.}
\label{ex1_fig2}
\end{center}
\end{figure}

\begin{figure}
\begin{center}
\includegraphics[scale=0.45]{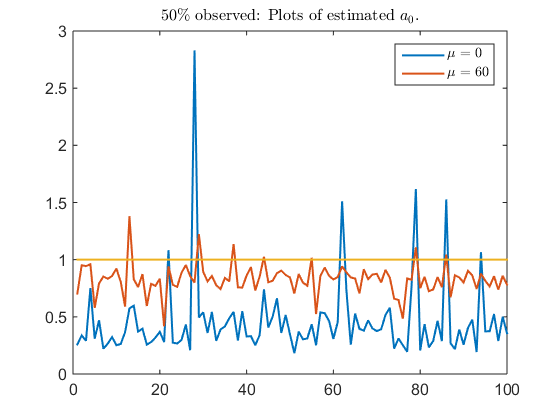}\\
\includegraphics[scale=0.45]{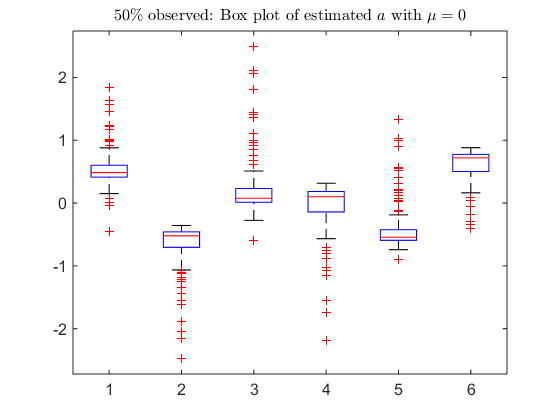}
\includegraphics[scale=0.45]{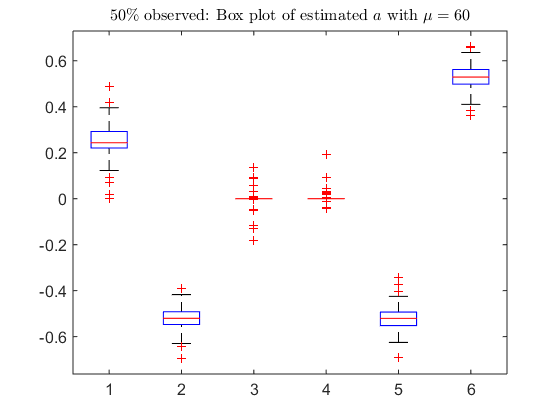}
\caption{Plots of estimated parameters $a_0$ and $a$ for 100 simulations when $50\%$ of the data is observed randomly with $\mu = 0$ (left) and $\mu=60$ (right). The other parameters are $r = 1/2$, $\lambda = 5$, $s = 1.0$. From the plots, we clearly observe significant improvements for having $\mu = 60$ (with $\ell^s$ regularization on $a$) over having $\mu=0$ (without using regularization). For $\mu = 0$, the box plot on the left shows that the estimated values for $a$ deviate significantly from the true values.}
\label{ex1_fig3}
\end{center}
\end{figure}

\begin{example}
In this example, using Algorithm \ref{alg1} we show the significance of having the sparsity constraint $\sum_{i\in D} |y_i-\widetilde{y}_i|^r$ in the model \eqref{pois_min_eq} to desensitize anomalies and outliers for obtaining a more accurate parameter estimation.  Here we assume all data are observed, that is $D = \{1,\cdots, N\}$. For each simulation, we randomly select a certain percentage (1\%, 5\% or 10\%) of the data and replace them with some anomalous value (here we pick $20$ as an example). See Figure \ref{ex2_fig0} for an example showing the original series and a contaminated version. To test the significance of the sparsity constraint term, we consider two scenarios:
\begin{enumerate} 
\item Assuming the observed data has no outliers and hence enforcing $y_i=\widetilde{y}_i$, $i\in D$. This amounts to picking $\lambda$ to be large, say $\lambda = 50$.
\item Assuming the observed data has outliers and hence allowing for $y_i\neq \widetilde{y}_i$, $i\in D$, whenever $\widetilde{y}_i$ is an anomaly. This amounts to picking $\lambda$ to be small, say $\lambda = 2$.
\end{enumerate}
In all cases the remaining parameters are: $r = 1/2, s = 1$ and $\mu = 10$. Figures \ref{ex2_fig1}-\ref{ex2_fig3} show the plots of the estimated parameters $a_0$ and $a$ for $100$ simulations with the amount of contaminations to be $1\%$, $5\%$ and $10\%$. We observe that enforcing $y_i=\widetilde{y}_i$ (that is $\lambda = 50$) greatly alters the reconstruction of $a_0$ and $a$ and this error increases as the amount of contamination increases. This enforcement causes an increase in the mean value (see Figures \ref{ex2_fig4}-\ref{ex2_fig6}) and a decrease in the absolute values of the correlation coefficients in the reconstructed time series (see Figures \ref{ex2_fig1}-\ref{ex2_fig3}.) Letting $\lambda = 2$, which is small, prevents the model from fitting $y_i$ to the anomalous $\widetilde{y}_i$. As a result, the estimated parameters $a_0$ and $a$ are much closer to the ground truth. The inaccuracy in parameter estimation effects the reconstruction of the mean series, and hence the prediction. The results are visibly seen in Figures \ref{ex2_fig4}-\ref{ex2_fig6}.
\end{example}

\begin{figure}
\begin{center}
\includegraphics[scale=0.45]{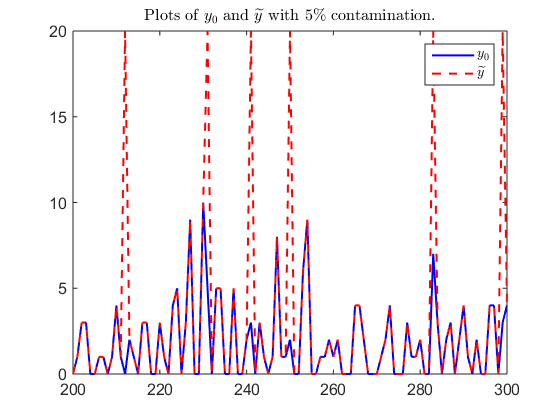}\\
\caption{The plots show the original series $y_0$ and the series $\widetilde{y}$ with $5\%$ contamination.}
\label{ex2_fig0}
\end{center}
\end{figure}

\begin{figure}
\begin{center}
\includegraphics[scale=0.45]{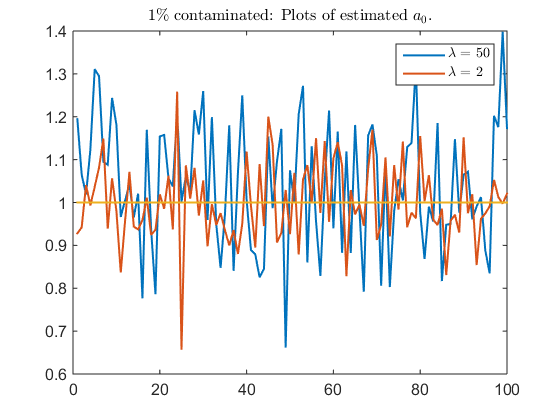}\\
\includegraphics[scale=0.45]{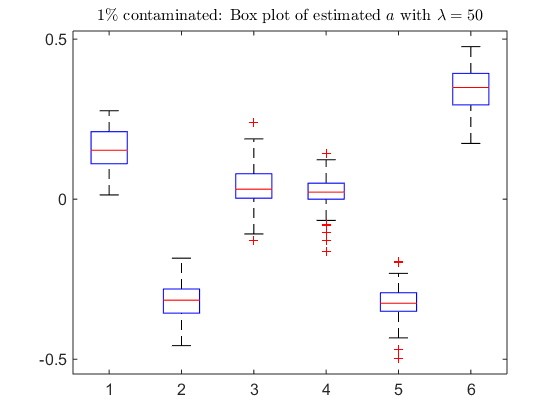}
\includegraphics[scale=0.45]{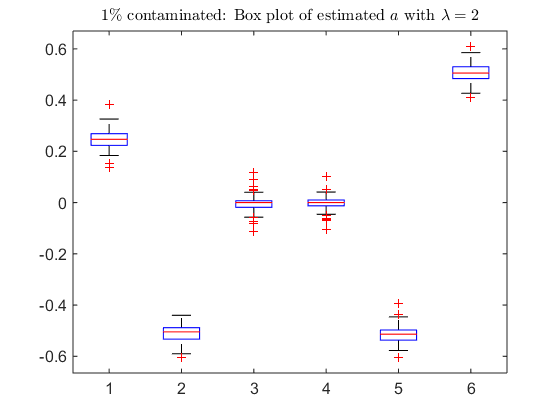}
\caption{Plots of estimated parameters $a_0$ and $a$ for 100 simulations when $1\%$ of the data is contaminated comparing the two scenarios: 1) $\lambda = 50$ which amounts to assuming no outliers in the observed data, and 2) $\lambda=2$ which amounts to assuming there are outliers in the observed data and let the model detect these anomalies. The other parameters used are $r = 1/2$, $\mu = 10$, $s = 1.0$.}
\label{ex2_fig1}
\end{center}
\end{figure}

\begin{figure}
\begin{center}
\includegraphics[scale=0.45]{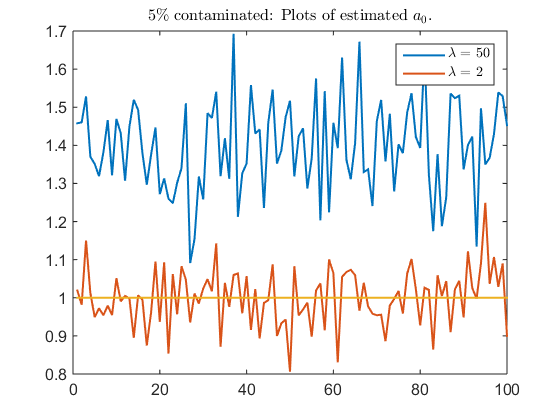}\\
\includegraphics[scale=0.45]{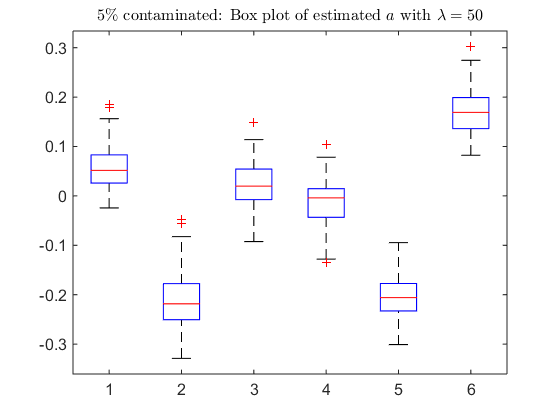}
\includegraphics[scale=0.45]{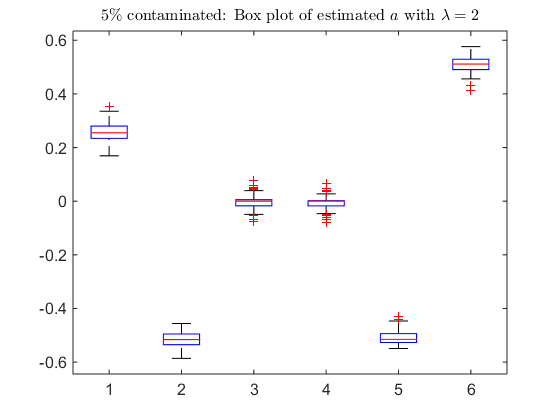}
\caption{Plots of estimated parameters $a_0$ and $a$ for 100 simulations when $5\%$ of the data is contaminated comparing the two scenarios: 1) $\lambda = 50$ which amounts to assuming no outliers in the observed data, and 2) $\lambda=2$ which amounts to assuming there are outliers in the observed data and let the model detect these anomalies. The other parameters used are $r = 1/2$, $\mu = 10$, $s = 1.0$.}
\label{ex2_fig2}
\end{center}
\end{figure}

\begin{figure}
\begin{center}
\includegraphics[scale=0.45]{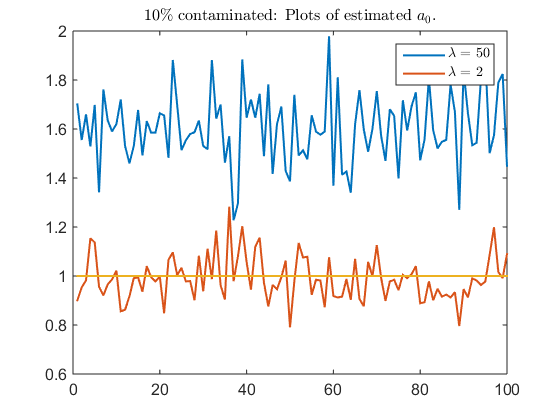}\\
\includegraphics[scale=0.45]{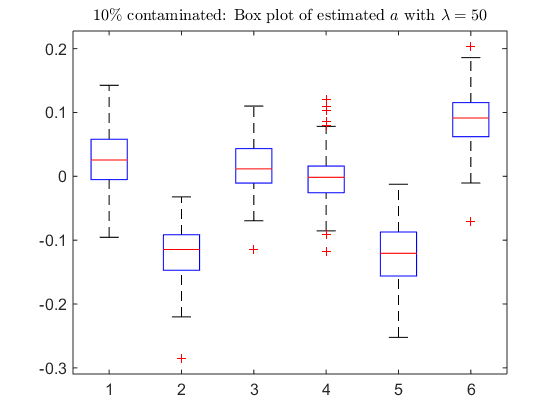}
\includegraphics[scale=0.45]{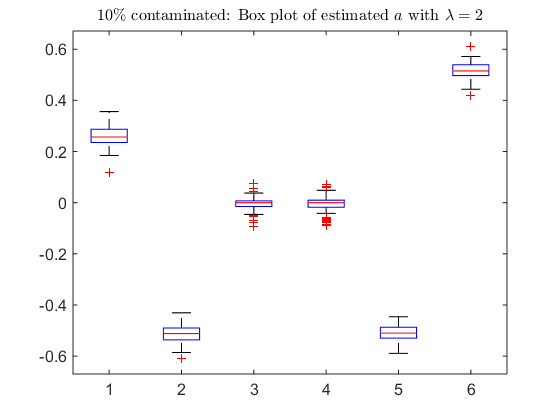}
\caption{Plots of estimated parameters $a_0$ and $a$ for 100 simulations when $10\%$ of the data is contaminated comparing the two scenarios: 1) $\lambda = 50$ which amounts to assuming no outliers in the observed data, and 2) $\lambda=2$ which amounts to assuming there are outliers in the observed data and let the model detect these anomalies. The other parameters used are $r = 1/2$, $\mu = 10$, $s = 1.0$.}
\label{ex2_fig3}
\end{center}
\end{figure}

\begin{figure}
\begin{center}
\includegraphics[scale=0.35]{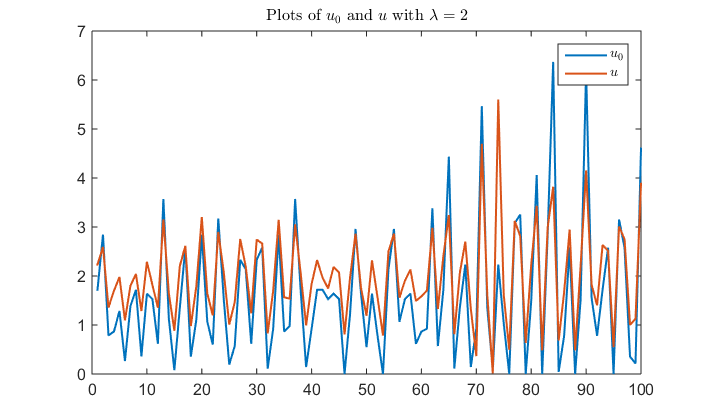}
\includegraphics[scale=0.35]{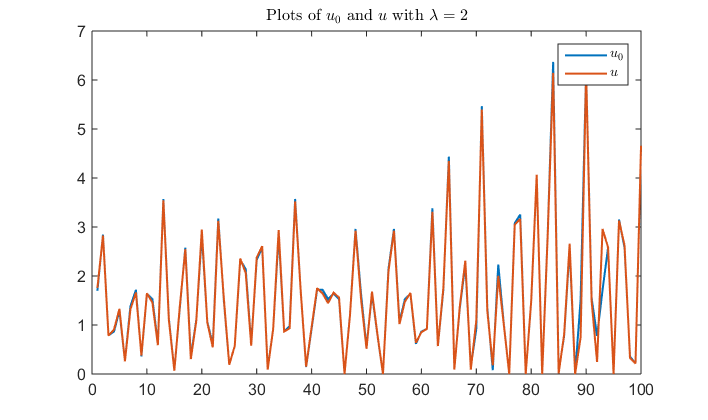}
\caption{Plots of the true mean series $u_0$ and the reconstructed mean series $u$ for one simulation when $1\%$ of the data is contaminated comparing the two scenarios: 1) $\lambda = 50$ which amounts to assuming no outliers in the observed data, and 2) $\lambda=2$ which amounts to assuming there are outliers in the observed data and let the model detect these anomalies. The other parameters used are $r = 1/2$, $\mu = 10$, $s = 1.0$.}
\label{ex2_fig4}
\end{center}
\end{figure}

\begin{figure}
\begin{center}
\includegraphics[scale=0.35]{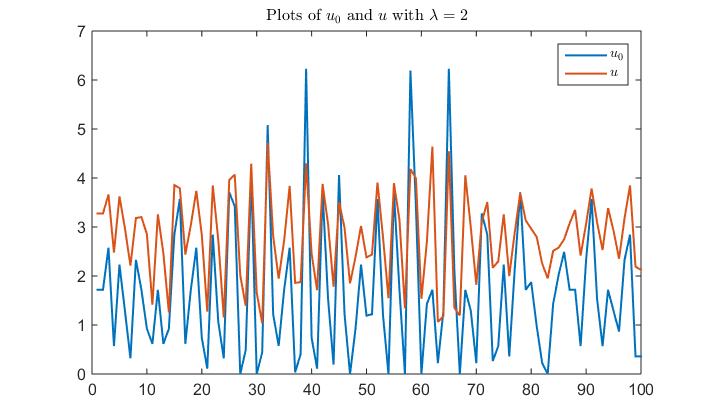}
\includegraphics[scale=0.35]{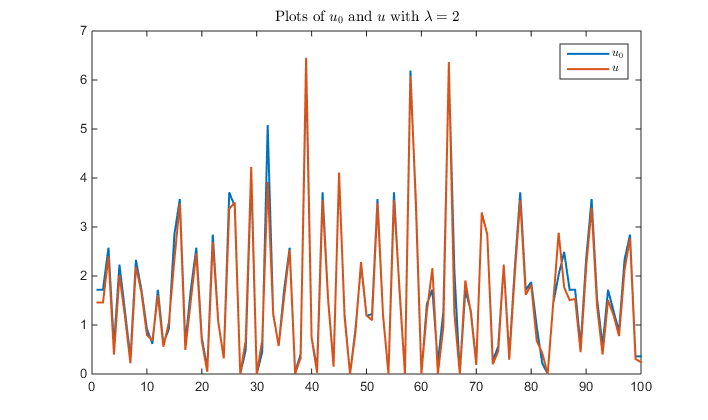}
\caption{Plots of the true mean series $u_0$ and the reconstructed mean series $u$ for one simulation when $5\%$ of the data is contaminated comparing the two scenarios: 1) $\lambda = 50$ which amounts to assuming no outliers in the observed data, and 2) $\lambda=2$ which amounts to assuming there are outliers in the observed data and let the model detect these anomalies. The other parameters used are $r = 1/2$, $\mu = 10$, $s = 1.0$.}
\label{ex2_fig5}
\end{center}
\end{figure}

\begin{figure}
\begin{center}
\includegraphics[scale=0.35]{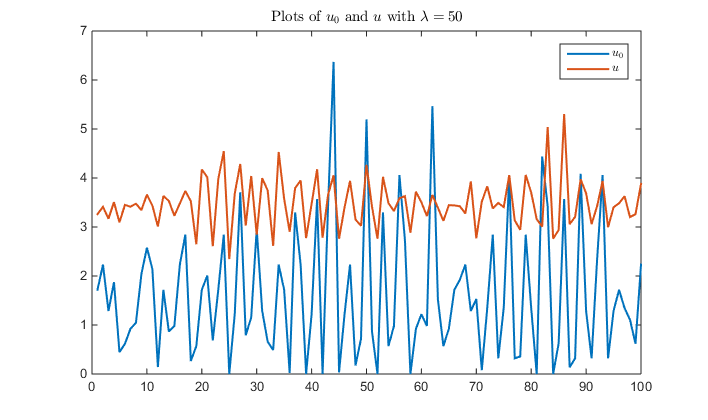}
\includegraphics[scale=0.35]{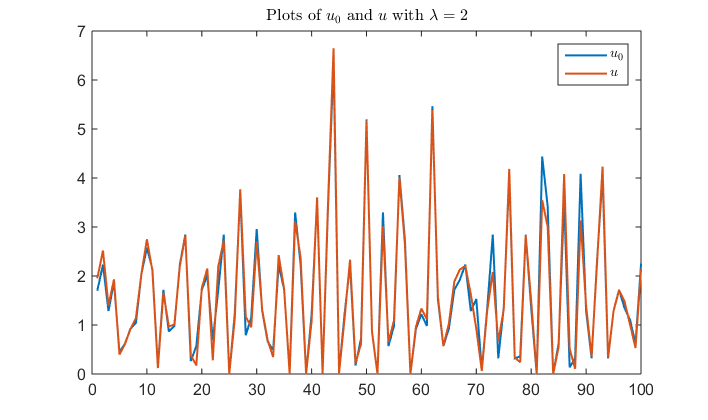}
\caption{Plots of the true mean $u_0$ and the reconstructed mean $u$ for one simulation when $10\%$ of the data is contaminated comparing the two scenarios: 1) $\lambda = 50$ which amounts to assuming no outliers in the observed data, and 2) $\lambda=2$ which amounts to assuming there are outliers in the observed data and let the model detect these anomalies. The other parameters used are $r = 1/2$, $\mu = 10$, $s = 1.0$.}
\label{ex2_fig6}
\end{center}
\end{figure}

\begin{example}
Figures \ref{ex3_fig1}-\ref{ex3_fig2} show numerical results of 100 simulated data that have both missing entries and contamination among the observed ones using Algorithm \ref{alg1}. In Figure \ref{ex3_fig1}, $75\%$ of the entries are observed and $2.5\%$ of these entries are contaminated with outliers. The parameters used are $\lambda = 5, r=0.5, \mu = 30$ and $s = 1$.
 In Figure \ref{ex3_fig2},  $50\%$ of the entries are observed and $2.5\%$ of these entries are contaminated with outliers. The parameters used are $\lambda = 5, r=0.5, \mu = 60$ and $s = 1$.
\end{example}

\begin{figure}
\begin{center}
\includegraphics[scale=0.45]{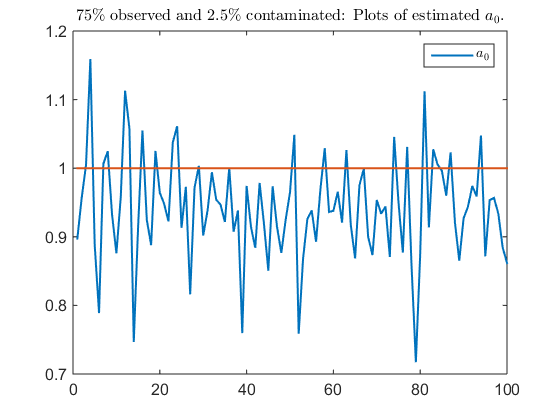}
\includegraphics[scale=0.45]{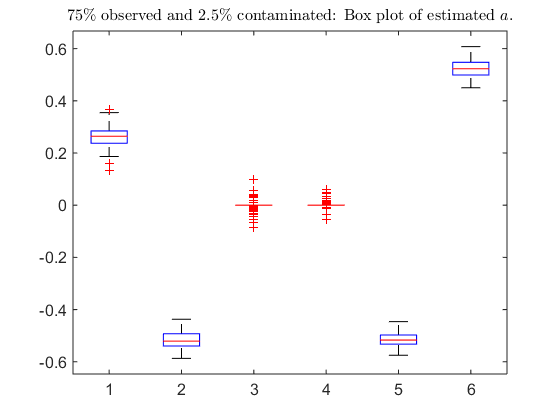}
\caption{Plots of estimated parameters $a_0$ and $a$ for 100 simulations when $75\%$ of the data is observed and $2.5\%$ of the observed entries are contaminated. The parameters used are $\lambda = 5, r=0.5, \mu = 30$ and $s = 1$.}
\label{ex3_fig1}
\end{center}
\end{figure}

\begin{figure}
\begin{center}
\includegraphics[scale=0.45]{figures/ex8/a0_50_2p5.png}
\includegraphics[scale=0.45]{figures/ex8/a_50_2p5.png}
\caption{Plots of estimated parameters $a_0$ and $a$ for 100 simulations when $50\%$ of the data is observed and $2.5\%$ of the observed entries are contaminated. The parameters used are $\lambda = 5, r=0.5, \mu = 30$ and $s = 1$.}
\label{ex3_fig2}
\end{center}
\end{figure}

\begin{example}
In the proposed model \eqref{pois_min_eq}, $p$ and $q$ are given a prior, and in all previous examples we assume $p=6$ and $q=0$ are known. However in real applications, these values need to be estimated. Model selection, that is picking the right choice for $p$ and $q$, is a difficult problem to tackle. Current approaches involve running the algorithm on various choices of $p$ and $q$ and then use criterion such as AIC \cite{akaike1974new}, BIC \cite{schwarz1978estimating}, etc. to pick the optimal values. We argue that the $\ell^s, 0<s\le 1,$ constraint on the parameters $a$ and $b$ removes the model selection task from the problem and lets the model discover an optimal sparse solution for $a$ and $b$. In this example, we choose $p=15$ and $q=0$. Figure \ref{ex4_fig1} shows the box plots of the estimated parameters $a_0$ and $a$ for 100 simulations. For each simulation, only $75\%$ of entries are observed and $2.5\%$ of the observed entries are contaminated. The parameters used are: $\lambda = 5, r=0.5, \mu = 10,$ and $s = 0.75$.
\end{example}

\begin{figure}
\begin{center}
\includegraphics[scale=0.35]{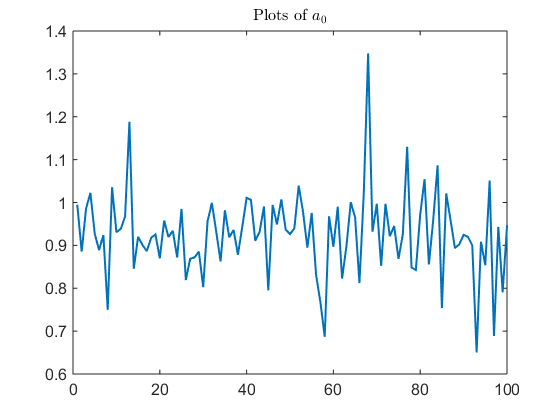}\\
\includegraphics[scale=0.35]{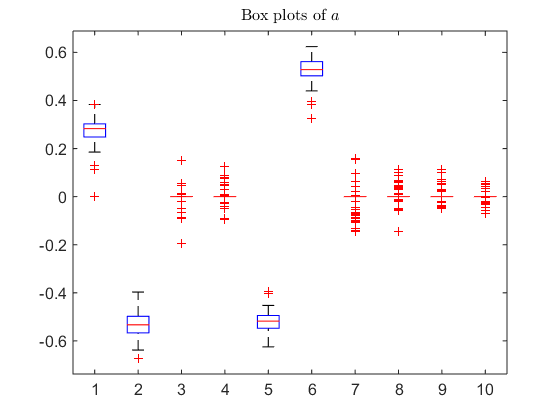}
\includegraphics[scale=0.35]{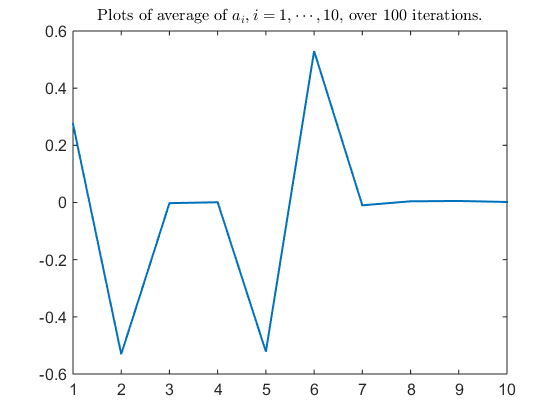}
\caption{Plots of estimated parameters $a_0$ and $a$ for 100 simulations when $75\%$ of the data is observed and $2.5\%$ of the observed entries are contaminated. The parameters used are $\lambda = 5, r=0.5, \mu = 10$ and $s = 0.75$.}
\label{ex4_fig1}
\end{center}
\end{figure}

{\bf Discussion:} In this paper we present an autoregressive time series model to robustly learn the parameters (the mean and correlation coefficients) in the presence of noise, outliers and missing entries.  In the presence of outliers or anomalies, we show that the nonconvex sparsity constraint desensitizes outliers and as a result the model provides a more robust estimation of the parameters. In the presence of missing entries we show that the constraint $\ell^s$, $0<s\le1$, on the parameters significantly improves the accuracy of the estimated parameters. Model selection, that is picking the right choice for $p$ and $q$, is a difficult problem to tackle in time series analysis. Current approaches involve estimating the parameters for various choices of $p$ and $q$ and then use criterion such as AIC, BIC, etc. to pick the optimal $p$ and $q$. The $\ell^s, 0<s\le 1,$ constraint on the parameters $a$ and $b$ removes the model selection task from the problem and lets the model select an optimal sparse solution for $a$ and $b$. However, in return one needs to provide the parameter $\lambda $ and $\mu$ as inputs. We also mention that the proposed model \eqref{min_prob_eq} can be applied to other types of noise besides additive Gaussian or Poisson. Moreover, this model can also be extended to a multivariate case.

\appendix
\section{Appendix}\label{appen_A}

We remark that this is a standard technique for deriving the likelihood function, see for instance \cite{lutkepohl2007new}. For completeness, we show the likelihood function for our problem here in the following proposition.
\begin{proposition}
Given the observed series $\tilde{y}_D=\{\tilde{y}_i\}_{i\in D}$, the minimization problem \eqref{min_prob_eq} is equivalent to
\begin{equation}\label{a_eq0}
\max_{y,f,\theta} P(\tilde{y}_D,y,f,\theta),
\end{equation}
for some fixed $\lambda>0$ and $0<r\le 1$.
\end{proposition}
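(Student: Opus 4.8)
The plan is to show that the objective in \eqref{min_prob_eq} is (up to an additive constant and a positive multiplicative constant) the negative logarithm of the joint density $P(\tilde{y}_D,y,f,\theta)$, so that minimizing one is the same as maximizing the other. First I would factor the joint density using the chain rule of probability and the independence assumptions already in force in the paper. Conditioning on $(y,f,\theta)$, the observed partial series $\tilde{y}_D$ is generated by corrupting $y_D$ with a sparse anomaly vector; the modeling hypothesis (outliers are rare events supported sparsely on $D$, and $\{y_i-\tilde y_i\}_{i\in D}$ is penalized by $\ell^r$) corresponds to positing a prior density on the residual $z_i := \tilde y_i - y_i$, $i\in D$, of the exponential-type form $P(z_i) = C_\lambda e^{-\lambda |z_i|^r}$, exactly as the paper does for $f$ in \eqref{prior_f_eq} and as Huber does for the Laplacian tail in Section \ref{prior_sec}. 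Hence
$$
P(\tilde y_D \mid y,f,\theta) = \prod_{i\in D} C_\lambda\, e^{-\lambda|y_i-\tilde y_i|^r}.
$$

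Next I would invoke the factorization already recalled in the excerpt: under independence of $f$ and $\theta$ and the Markov/conditional structure \eqref{c1}, the density of the complete latent series together with the parameters is
$$
P(y,f,\theta) = \left(\prod_{i=1}^N P(y_i\mid u_i,\theta_i)\right) P(f)\,P(\theta),
$$
where each $u_i$ is the deterministic function \eqref{c2} of earlier terms. Multiplying the two displays gives
$$
P(\tilde y_D, y, f,\theta) = \left(\prod_{i=1}^N P(y_i\mid u_i,\theta_i)\right) P(f)\,P(\theta)\left(\prod_{i\in D} C_\lambda\, e^{-\lambda|y_i-\tilde y_i|^r}\right).
$$
Taking $-\log$ of both sides turns products into sums:
$$
-\log P(\tilde y_D,y,f,\theta) = -\sum_{i=1}^N \log P(y_i\mid u_i,\theta_i) - \log P(f) - \log P(\theta) + \lambda\sum_{i\in D}|y_i-\tilde y_i|^r - |D|\log C_\lambda.
$$
The right-hand side is precisely $J(y,f,\theta)$ from \eqref{min_prob_eq} minus the constant $|D|\log C_\lambda$, which does not depend on the optimization variables. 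Since $t\mapsto -\log t$ is strictly decreasing, $\argmax_{y,f,\theta} P(\tilde y_D,y,f,\theta) = \argmin_{y,f,\theta}\bigl(-\log P(\tilde y_D,y,f,\theta)\bigr) = \argmin_{y,f,\theta} J(y,f,\theta)$, which is the claimed equivalence, valid for every fixed $\lambda>0$ and $r\in(0,1]$ (the normalizing constant $C_\lambda = C(\lambda,r)$ exists since $\int_\R e^{-\lambda|z|^r}\,dz<\infty$ for $r>0$).

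The one genuinely delicate point — more of a modeling justification than a computation — is the identification of the $\ell^r$ penalty on $\{y_i-\tilde y_i\}_{i\in D}$ with a bona fide prior $P(\tilde y_D\mid y,f,\theta)$: one must declare that, conditionally on the clean series $y$, the observed anomalies $z_i=\tilde y_i - y_i$ are i.i.d. with density proportional to $e^{-\lambda|z_i|^r}$ and are independent of $(f,\theta)$ given $y$. Once that generative assumption is spelled out (it is the Bayesian shadow of the ``sparse, rare'' interpretation stated in the Introduction), everything else is the routine chain-rule bookkeeping above. I would therefore state this assumption explicitly at the start of the proof, then carry out the factorization, the $-\log$, and the monotonicity argument in that order.
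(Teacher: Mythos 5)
Your proposal is correct and follows essentially the same route as the paper: posit an exponential-type density $C_r\lambda^{1/r}e^{-\lambda|x|^r}$ for the additive residual $\tilde y_i-y_i$ on $D$, factor the joint density into $\prod_{i\in D}P(\tilde y_i\mid y_i)\cdot\prod_{i=1}^N P(y_i\mid u_i,\theta_i)\cdot P(f)P(\theta)$, take $-\log$, and discard the constants in $\lambda$ and $r$. The only cosmetic difference is that the paper derives the conditional-independence factorizations by a recursive chain-rule argument rather than asserting them directly, and your explicit flagging of the generative assumption behind the $\ell^r$ term matches the paper's opening assumption in its proof.
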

\begin{proof}
Assume for all $i\in D$, $\tilde{y}_{i}  = y_{i} + x_{i}$, where $x_{i}$ is an additive random noise with
$$
P(x_{i}) = C_r \lambda^{1/r} e^{-\lambda |x_{i}|^r}, \mbox{ for some }\lambda>0\mbox{ and } 0<r\le 1.
$$
This implies
$$
P(\tilde{y}_{i}|y_{i}) = P(x_{i}) = C_r \lambda^{1/r} e^{-\lambda |\tilde{y}_{i}-y_{i}|^r}.
$$
Using Bayes' law, the joint probability is given by
\begin{eqnarray*}
P(\tilde{y}_D,y,f,\theta) &=&P(\{\tilde{y}_{i}\}_{i\in D}, y,f,\theta)\\
&=& P(\tilde{y}_{i}|\{\tilde{y}_{j}\}_{j\in D\setminus\{i\}}, y,f,\theta)P(\{\tilde{y}_{j}\}_{j\in D\setminus\{i\}}, y,f,\theta).
\end{eqnarray*}
Since $\tilde{y}_{i}$ is completely determined by $y_{i}$ and $\lambda$ (given), we have
$$
P(\tilde{y}_{i}|\{\tilde{y}_{j}\}_{j\in D\setminus\{i\}}, y,f,\theta)= P(\tilde{y}_{i}|y_{i}).
$$
This implies
\begin{equation*}
P(\tilde{y}_D,y,f,\theta) = P(\tilde{y}_{i}|y_{i})P(\{\tilde{y}_{j}\}_{j\in D\setminus\{i\}}, y,f,\theta).
\end{equation*}
Recursively apply the same technique to $P(\{\tilde{y}_{j}\}_{j\in D\setminus\{i\}}, y,f,\theta)$, we get
\begin{equation}\label{a_eq1}
P(\tilde{y}_D,y,f,\theta) = \prod_{i\in D} P(\tilde{y}_{i}|y_{i}) P(y,f,\theta).
\end{equation}
As for $P(y,f,\theta)$, we first note that $u_i$ is completely determined by $\{y_j\}_{j<i}$, $f$ and $\theta$. This implies
\begin{eqnarray*}
P(y,f,\theta) &=& P(y_N,\cdots, y_1,f,\theta)\\
&=& P(y_N|y_{N-1},\cdots, y_1,f,\theta) P(y_{N-1},\cdots, y_1,f,\theta).
\end{eqnarray*}
Since $y_N$ is completely determined by $u_N$ and $\theta_N$, we get
$$
P(y_N|y_{N-1},\cdots, y_1,f,\theta) = P(y_N|u_N,\theta_N).
$$
This implies
$$
P(y,f,\theta) = P(y_N|u_N,\theta_N)P(y_{N-1},\cdots, y_1,f,\theta).
$$
Recursively apply the same technique to $P(y_{N-1},\cdots, y_1,f,\theta)$, we get
\begin{equation}\label{a_eq2}
P(y,f,\theta) = \prod_{i=1}^N P(y_i|u_i,\theta_i)P(f,\theta).
\end{equation}
Combining \eqref{a_eq1} and \eqref{a_eq2}, we have
\begin{equation}
P(\tilde{y}_D,y,f,\theta) = \prod_{i\in D} P(\tilde{y}_{i}|y_{i})\prod_{i=1}^N P(y_i|u_i,\theta_i)P(f,\theta),
\end{equation}
where $P(f,\theta)$ is the joint prior on $f$ and $\theta$. 

The maximization problem \eqref{a_eq0} is equivalent to 
\begin{equation}\label{min_llh_eq}
\min_{y,f,\theta} \left\{-\log(P(\tilde{y}_D,y,f,\theta) )\right\},
\end{equation}
where
\begin{equation*}
\begin{split}
-\log(P(\tilde{y}_D,y,f,\theta) ) &= -\sum_{i\in D}\log(P(\tilde{y}_i|y_i)) - \sum_{i=1}^N \log(P(y_i|u_i,\theta_i))\\
& - \log(P(f,\theta)\\
&= -|D|\log(C_r) -\frac{|D|}{r}\log\lambda + \lambda\sum_{i\in D} |\tilde{y}_i-y_i|^r\\
&-\sum_{i=1}^N \log(P(y_i|u_i,\theta)) - \log(P(f,\theta).
\end{split}
\end{equation*}
Treating $r$ and $\lambda$ as fixed constants, we see that \eqref{min_llh_eq} is equivalent to
\begin{equation*}
\begin{split}
\min_{y,f,\theta}\left\{\lambda\sum_{i\in D} |\tilde{y}_i-y_i|^r
-\sum_{i=1}^N \log(P(y_i|u_i,\theta)) - \log(P(f,\theta))\right\},
\end{split}
\end{equation*}
which is \eqref{min_prob_eq} if we assume $f$ and $\theta$ are independent.
\end{proof}

\begin{remark}
Using the same techniques as above, we see that the (-)log-likelihood function corresponding to the energy \ref{pois_min_eq} is given by
\begin{equation}
\begin{split}
L&= \sum_{i=1}^N\left[u_i - y_i\log(u_i) + \log(\Gamma(y_i+1))\right] \\
&-|D|\log(C_r) - \frac{|D|}{r}\log(\lambda) + \lambda \sum_{i\in D} |\widetilde{y}_i - y_i|^r\\
&-(p+q)\log(C_s) - \frac{p+q}{s}\log(\mu) + \mu \left[\sum_{j=1}^p |a_j|^s + \sum_{j=1}^q |b_j|^s\right].
\end{split}
\end{equation}

\end{remark}

\bibliography{ts_biblio}
\bibliographystyle{plain}

\end{document}